\newtheorem{thm}{Theorem}
\newtheorem{lem}[thm]{Lemma}
\newtheorem{prop}[thm]{Proposition}
\newtheorem{rem}[thm]{Remark}
\newtheorem{dfn}[thm]{Definition}
\newcommand{\dft}[1]{\textbf{\textit{#1}}}
\newcommand{\namedref}[2]{\hyperref[#2]{#1~\ref*{#2}}}
\newcommand{\sectionref}[1]{\namedref{Section}{#1}}
\newcommand{\subsectionref}[1]{\namedref{Subsection}{#1}}
\newcommand{\theoremref}[1]{\namedref{Theorem}{#1}}
\newcommand{\defref}[1]{\namedref{Definition}{#1}}
\newcommand{\lemmaref}[1]{\namedref{Lemma}{#1}}
\newcommand{\appref}[1]{\namedref{Appendix}{#1}}
\newcommand{\propref}[1]{\namedref{Proposition}{#1}}
\newcommand{\algref}[1]{\namedref{Algorithm}{#1}}
\newcommand{\equalityref}[1]{\hyperref[#1]{Equality~\eqref{#1}}}
\newcommand{\inequalityref}[1]{\hyperref[#1]{Inequality~\eqref{#1}}}
\newcommand{\remarkref}[1]{\namedref{Remark}{#1}}
\newcommand{\LCP}{\mathrm{LCP}}
\newcommand{\NLD}{\mathrm{NLD}}
\newcommand{\PLS}{\mathrm{PLS}}
\newcommand{\tPLS}{t\text{-}\mathrm{PLS}}
\newcommand{\acy}{\textsc{acyclic}}
\def\cA{{\cal A}}
\def\cC{{\cal C}}
\def\cF{{\cal F}}
\def\cP{{\cal P}}
\def\bV{\text{\bf v}}
\def\bP{\text{\bf p}}
\def\True{\textsc{true}}
\def\False{\textsc{false}}
\newcommand{\Cross}[3]{C(#1,#2,#3)}
\newcommand{\Neigh}[3]{N_#1(#2,#3)}
\newcommand{\abs}[1]{\left |#1\right |}
\newcommand{\paren}[1]{\left(#1\right)}
\newcommand{\set}[1]{\left\{#1\right\}}
\DeclareMathOperator{\dist}{dist}
\newcommand{\Add}{\textsc{Add}}
\newcommand{\Bounce}{\textsc{Bounce}}
\newcommand{\Count}{\textsc{Count}}
\newcommand{\Echo}{\textsc{Echo}}
\newcommand{\Increment}{\textsc{Increment}}
\newcommand{\RVerify}{\textsc{RVerify}}
\newcommand{\Send}{\textsc{Send}}
\newcommand{\Tcount}{T_{\mbox{count}}}
\newcommand{\Tleaf}{T_{\mbox{leaf}}}
\newcommand{\Tstart}{T_{\mbox{start}}}
\newcommand{\Tstop}{T_{\mbox{stop}}}
\newcommand{\Verify}{\textsc{Verify}}
\newcommand{\bhead}{\textsc{head}}
\newcommand{\bmid}{\textsc{mid}}
\newcommand{\btail}{\textsc{tail}}
\newcommand{\carry}{\textsc{carry}}
\newcommand{\continue}{\textsc{continue}}
\newcommand{\ctr}{\textsc{ctr}}
\newcommand{\halt}{\textsc{halt}}
\newcommand{\headcheck}{\textsc{head\_check}}
\newcommand{\iszero}{\textsc{is\_zero}}
\newcommand{\leaves}{\textsc{leaves}}
\newcommand{\msg}{\textsc{msg}}
\newcommand{\rec}{\textsc{rec}}
\newcommand{\send}{\textsc{snd}}
\newcommand{\val}{\textsc{val}}
\newcommand{\algorithmicassert}{\textbf{assert: }}
\newcommand{\ASSERT}{\STATE\algorithmicassert}
\newcommand{\dcount}{\textsc{dcount}}
\newcommand{\tcount}{\textsc{tcount}}
\newcommand{\firstPass}{\textsc{firstPass}}
\newcommand{\false}{\textsc{false}}
\DeclareMathOperator{\diam}{diam}
\title{Space-Time Tradeoffs for Distributed Verification}
\titlerunning{Space-Time Tradeoffs}
\author{%
  Rafail Ostrovsky\thanks{Research supported in part by NSF grant 1619348, DARPA, US-Israel BSF grant 2012366, OKAWA  Foundation  Research  Award,  IBM  Faculty  Research  Award,  Xerox  Faculty  Research Award, B. John Garrick Foundation Award, Teradata Research Award, and Lockheed-Martin Corporation Research Award. The views expressed are those of the authors and do not reflect position of the Department of Defense or the U.S. Government.}\inst{1}%
  \and Mor Perry\thanks{Partially supported by Apple Graduate Fellowship.}\inst{2}
  \and Will Rosenbaum\inst{2}
}
\institute{%
  Department of Computer Science and Department of Mathematics, University of California, Los Angeles, CA, USA
  \and School of Electrical Engineering, Tel Aviv University, Tel Aviv, Israel
}
\date{\today}
\begin{document}
	
  \maketitle

  \begin{abstract}
    Verifying that a network configuration satisfies a given boolean predicate is a fundamental problem in distributed computing. Many variations of this problem have been studied, for example, in the context of proof labeling schemes ($\PLS$), locally checkable proofs ($\LCP$), and non-deterministic local decision ($\NLD$). In all of these contexts, verification time is assumed to be constant. Korman, Kutten and Masuzawa~\cite{KKM} presented a proof-labeling scheme for MST, with poly-logarithmic verification time, and logarithmic memory at each vertex.

    In this paper we introduce the notion of a $\tPLS$, which allows the verification procedure to run for super-constant time. Our work analyzes the tradeoffs of $\tPLS$ between time, label size, message length, and computation space. We construct a universal $\tPLS$ and prove that it uses the same amount of total communication as a known one-round universal $\PLS$, and $t$ factor smaller labels. In addition, we provide a general technique to prove lower bounds for space-time tradeoffs of $\tPLS$. We use this technique to show an optimal tradeoff for testing that a network is  acyclic (cycle free). Our optimal $\tPLS$ for acyclicity uses label size and computation space $O((\log n)/t)$. We further describe a recursive $O(\log^* n)$ space verifier for acyclicity which does not assume previous knowledge of the run-time $t$.
  \end{abstract}



\section{Introduction}

A fundamental problem in distributed computing is to determine if 
a network configuration satisfies some predicate. In the distributed setting, a network configuration is represented by an underlying graph, where each vertex represents a processor, edges represent communication links between processors, and each vertex has a state. For example, the state of every vertex can be a color, and the predicate signifies that the coloring is proper, i.e., that every edge has its endpoints colored differently. Processors learn about the  network by exchanging messages along the edges. Some properties are local by nature and easy to verify, yet many natural problems---for example, testing if the network contains cycles---cannot be tested in less than diameter time, even if message size and local computational power are unbounded.

In order to cope with strong time lower bounds, Korman, Kutten, and Peleg introduced in~\cite{KKP} a computational model, called \emph{proof-labeling schemes} (PLS), where vertices are given auxiliary global information in the form of \emph{labels}. This auxiliary information may allow vertices to verify that a property is satisfied more efficiently than could be achieved without the aid of labels. Specifically, a PLS consists of two components, a \emph{prover} and a \emph{verifier}. The prover is an oracle which assigns labels to vertices. The verifier is a distributed algorithm which runs on the labeled configuration and outputs $\True$ or $\False$ at each vertex as a function of its state, its label, and the labels it receives. A PLS is \emph{complete} if for every legal configuration (satisfying the predicate), prover can assign labels such that all vertices output $\True$. The PLS is \emph{sound} if for every illegal configuration (which does not satisfy the predicate) for every labeling, some vertex outputs $\False$.

Schemes for verifying a predicate are useful in many applications. One such application is checking the output of a distributed algorithm~\cite{APV91,FRT13}. For example, if a procedure is meant to output a spanning-tree of the network, it may be useful to periodically verify that the output does indeed not contain cycles. 
If the original procedure which finds the spanning-tree can additionally produce labels, verification may be achieved substantially faster than diameter time required without the aid of labels. A simple procedure for checking the legality of the current state is very useful in the construction of self stabilizing algorithms~\cite{AO94,AKY,KKM,BFP14}. Other applications include estimating the complexity of logics required for distributed run-time verification~\cite{FRT13}, establishing a general distributed complexity theory~\cite{FKP13}, and proving lower bounds on the time required for distributed approximation~\cite{DH+12}. Local verification was recently applied in the design and analysis of software defined networks (SDN) in~\cite{Schmid2013}.

Distributed verification has been formalized in various models to suit its myriad applications. These models include proof-labeling schemes (PLS)~\cite{KKP}, locally checkable proofs (LCP)~\cite{GS11}, and non-deterministic local decision (NLD)~\cite{FKP13}. We refer the reader to~\cite{Feuilloley2016} for a detailed comparison of these models. All three of these models are local in the sense that verification requires a constant number of rounds, independent of the size of the graph. PLS differs from LCP and NLD in that verification in (traditional) PLS occurs in a single communication round, while the LCP and NLD models allow verification in a fixed constant number of rounds. While a fast procedure is certainly a desirable feature in verification algorithms, it may be the case that other computational resources---space or communication---must also be considered. For example, in the case of PLS, deterministically verifying a sub-graph is acyclic requires labels of size $\Omega(\log n)$ per vertex~\cite{KKP}. However, specifying a sub-graph only requires  $O(\Delta)$ space (the maximum degree of a vertex) per vertex. Thus, if we restrict attention to local verification algorithms, the space requirement to store labels may be unboundedly larger than the space required to specify the instance.

Korman, Kutten and Masuzawa~\cite{KKM} presented a PLS for minimum spanning-tree with poly-logarithmic verification time and logarithmic memory at each vertex. In the present work we also consider super-constant time verification and address tradeoffs between computational resources in distributed verification algorithms: label size, communication, computation space, and time. Specifically, we address the following questions: If verification algorithms are allowed to run in super-constant time, can labels be significantly shorter? What are the tradeoffs between label size and verification time? Can verification be achieved using (per processor) space which is linear in the label size? We focus on the acyclicity problem and prove that labels can indeed be shortened by a factor of $t$---the run-time of the algorithm---compared to constant-round verification. Moreover, computation space for each vertex can be made linear in the label size. Note that in this model it does not trivially hold that each message contains exactly one label, since in each round every vertex receives a (potentially different) label from each neighbor, and the scheme should specify the message to be sent in the following round. We show that in our schemes messages are small enough so that the total communication is the same as in one-round verification.

\subsection{Our Contributions}

In this paper we consider proof-labeling schemes with super-constant verification time, and analyze tradeoffs between time, label size, message size, and computation space. Many of the results presented here were announced without proof in~\cite{Baruch2016}. In~\subsectionref{sec:universal}, we describe a universal scheme which can verify any property $\cP$. Suppose $G_s$, with $n$ vertices, $m$ edges, and each state can be represented using $s$ bits. Then for every $t\in O(\diam(G_s))$, our scheme verifies $\cP$ in $t$ rounds using labels and messages of size  $O((ns+\min\{n^2,m\log n\})/t)$. For $t=1$ this is the known universal scheme~\cite{KKP,GS11,BFP}. When $t\in\Omega(n)$, we obtain labels and messages of size $O(s+\min\{n,(m/n)\log n\})$. Overall, labels are significantly smaller, and total communication is the same. \subsectionref{sec:lower-bound} proves a general lower bound technique for label size of $t$-round schemes.

In~\sectionref{sec:acy} we consider the problem determining if a graph is acyclic. Using the lower bound technique of~\subsectionref{sec:lower-bound}, we prove in~\subsectionref{sec:acy-lb} that labels of size $\Omega((\log n) /t)$ are required for the $\acy$ problem. \subsectionref{sec:certificate-upper-bound} shows that this lower bound is tight. Our scheme for $\acy$ additionally uses optimal space and messages of size $O((\log n)/t)$. In particular, by taking $t$ to be a sufficiently large constant, our upper bound (along with the $\Omega(\log n)$ lower bound for $\acy$ in~\cite{KKP}) implies separation between the PLS and LCP models for acyclicity (see~\cite{Feuilloley2016}). The verifier for $\acy$ assumes that vertices are given some truthful information about the round number, for example, by being told when (a multiple of) $t$ rounds have elapsed. We prove that such information is necessary for \emph{any} super-constant and sub-linear time distributed algorithm in~\appref{sec:impossibility}. In~\subsectionref{sec:recursive}, we describe a recursive scheme for $\acy$ which uses space $O(\log^* n)$ and constant communication per vertex per round. The recursive verifier runs in time $O(n)$ in the worst case, but there are always correct labels which will be accepted in time $O(\log \diam(G))$. We note that in order to break the logarithmic space barrier, our schemes in Subsections~\ref{sec:certificate-upper-bound} and~\ref{sec:recursive} crucially do not rely upon unique identifiers for the vertices. Conversely, the lower bounds of Subsections~\ref{sec:lower-bound} and~\ref{sec:acy-lb} hold for a stronger model where vertices have unique identifiers, and labels may depend on the unique identifiers.

\subsection{Related Work}

Distributed verification has been studied extensively. It was studied and used in the design of self stabilizing algorithms, first in~\cite{AKY}, where the notion of local detection was introduced, and recently in~\cite{KKM}, where a super-constant time verification scheme was presented. Both papers use verification in the design of a self stabilizing algorithm for constructing a minimum spanning-tree. Verification has also received attention of its own. For example,~\cite{KK07} presented tight bounds for minimum spanning-tree verification. In~\cite{KKP}, Korman, Kutten, and Peleg formalized the concept of local verification and introduced the notion of proof-labeling schemes. In their paper, verification is defined to use one communication round, and among other results they show a $\Theta(\log n)$ bound on the complexity (label size and communication) for $\acy$. Recently,~\cite{BFP} suggested using randomization in order to break the lower bounds of deterministic schemes, and among other results they show a $\Theta(\log \log n)$ bound on the communication complexity of acyclicity. In this paper, we show that if we use super-constant verification time, we can break the lower bound of space consumption (label size and computation space), while the total amount of communication is the same as in one  deterministic verification round. Proof-labeling schemes with constant, greater than one, verification time was studied in~\cite{GS11}, and with super-constant verification time was presented in~\cite{KKM}. In~\cite{Foerster2016}, the authors consider verification of acyclicity and related problems in various models for directed graphs.

The question of what properties can be verified using a constant verification time was studied in~\cite{FKP13}, and several complexity classes were presented, including LD---local decision---which includes all properties that can be decided using constant number of rounds and no additional information, and NLD---non-deterministic local decision---which includes all properties that can be decided in a constant number of rounds with additional information in the form of a certificate given to each vertex. While NLD and PLS are closely related, they differ in that NLD certificates are independent of vertex identifiers. Since PLS labels may depend on vertex identifiers, there is a PLS for every sequentially decidable property on ID based networks, while not all sequentially decidable properties are in NLD. Our lower bounds in Subsections~\ref{sec:lower-bound} and~\ref{sec:acy-lb} allow labels to depend on unique vertex identifiers, so our arguments give identical lower bounds for certificate sizes in the weaker NLD model. Nonetheless, the schemes for $\acy$ in Subsections~\ref{sec:certificate-upper-bound} and~\ref{sec:recursive} do not require unique identifiers.

Awerbuch and Ostrovsky describe a $\log^* n$-space distributed acyclicity verifier in~\cite{AO94}. Our scheme described in Section~\ref{sec:recursive} achieves the same space usage per node, but improves on the algorithm of~\cite{AO94} in several ways. The worst-case runtime of our acyclicity verifier is $O(n)$, whereas that in~\cite{AO94} requires time $O(n \log^2 n)$. Further, in our scheme there are always correct labels which are accepted in time $O(\log n)$. This runtime nearly matches the $\Omega((\log n) / \log^* n)$ time lower bound implied by Theorem~\ref{thm:acyclic-lower-bound}. We leave it as an open question if it is possible to verify $\acy$ using constant space and worst case runtime $O(\log n)$.


\section{Model and Definitions}
\label{sec:model}
\subsection{Computational Framework}
A \dft{graph configuration} $G_s$ consists of an underlying graph $G = (V,E)$, and a state assignment function  $\varphi : V \to S$, where $S$ is a state space. The state of a vertex includes all of its local information. It may include the vertex's identity (in an ID based configuration), the weight of its adjacent edges (in a weighted configuration), or the result of an algorithm executed on the graph, for example, its color according to a coloring algorithm.

In a proof-labeling scheme, an oracle assigns labels $\ell : V \to L$. Verification is performed by a distributed algorithm on the labeled configuration in synchronous rounds. In each round every vertex receives messages from all of its neighbors, performs local computation, and sends a message to all of its neighbors. At the beginning of each round, a vertex scans its messages in a streaming fashion, and the \dft{computational space} is the maximum space required by a vertex in its local computation. Each vertex may send different messages to different neighbors in a round. When a vertex halts, it outputs $\True$ or $\False$. If the vertex labels contain unique identifiers, then we require that an algorithm has the same output for all legal assignments of unique IDs.

\subsection{Proof-Labeling Schemes and \texorpdfstring{$\tPLS$}{t-PLS}}
\label{sec:pls}

We start with a short description of proof-labeling schemes (PLS) as introduced in \cite{KKP}. Given a family $\cF$ of configurations, and a boolean predicate $\cP$ over $\cF$, a PLS for $(\cF,\cP)$ is a mechanism for deciding $\cP(G_s)$ for every $G_s \in \cF$. A PLS consists of two components: a \dft{prover} $\bP$, and a \dft{verifier} $\bV$. The prover is an oracle which, given any configuration $G_s\in\cF$, assigns a bit string $\ell(v)$ to every vertex $v$, called the \dft{label} of $v$. The verifier is a distributed algorithm running concurrently at every vertex. 
The verifier $\bV$ at each vertex outputs a boolean. If the outputs are $\True$ at all vertices, $\bV$ is said to \dft{accept} the configuration, and otherwise (i.e., $\bV$ outputs $\False$ in at least one vertex) $\bV$ is said to \dft{reject} the configuration. For correctness, a proof-labeling scheme $(\bP, \bV)$ for $(\cF,\cP)$ must be (1) \dft{complete} and (2) \dft{sound}. Formally, for every $G_s\in\cF$, we say $(\bP, \bV)$ is

\begin{enumerate}
\item \dft{complete} if $\cP(G_s) = \True$ then, using the labels assigned by $\bP$, the
  verifier $\bV$ accepts $G_s$, and
\item \dft{sound} if $\cP(G_s) = \False$ then, for every label assignment, the
  verifier $\bV$ rejects $G_s$.
\end{enumerate}

The \dft{verification complexity} of a proof-labeling scheme $(\bP, \bV)$, according to \cite{KKP}, is the maximal label size---the maximal length of a label assigned by the prover $\bP$ on a legal configuration (satisfying $\cP$). A PLS is defined to use one verification round, in which neighbors exchange labels. In this case, label size and message size are the same. 

In this paper we consider proof-labeling schemes with more than one verification round, in particular it can use super-constant time, and hence we define the \dft{message size} of the scheme  $(\bP, \bV)$ to be the largest message a vertex sends during the execution of $\bV$ on a legal configuration with the labels assigned by $\bP$. We denote a proof-labeling scheme with $t$-round verification by $t$-$\PLS$.


\section{General Space-Time Tradeoff Results}
\label{sec:space-time}

If there exists a $\PLS$ for $(\cF,\cP)$ with label size $\kappa$ (and hence, message size $\kappa$), then there exists a $t$-$\PLS$ for $(\cF,\cP)$ with label size $\kappa$ and message size $\kappa/t$. Indeed, vertices can communicate their $\kappa$-bit label in $t$ different \dft{shares} of size $\kappa/t$. In this section we give general results for label size reduction, along with message size, in a $t$-$\PLS$. The idea is to take a $1$-$\PLS$, and break it into smaller shares where vertices are assigned only a single share of the original label. We refer to this technique as \dft{label sharing}. In particular, we present a universal scheme and provide a tool for obtaining lower bounds.

\subsection{Universal \texorpdfstring{$\tPLS$}{t-PLS}}
\label{sec:universal}

A \dft{universal scheme} is a scheme that verifies every sequentially decidable property. In this subsection we assume that every vertex has an identifier, and identifiers in the same configuration are pairwise distinct. We give an upper bound on the label and message size of a universal scheme that uses $t$ communication rounds.

\begin{thm}
  \label{thm:universal}
  Let $\cF$ be a family of configurations with states set $S$ and diameter at least $D$, let $\cP$ be a boolean predicate over $\cF$ and suppose that every state in $S$ can be represented using $s$ bits. For every $t\in \Omega(D)$ there exists a $t$-$\PLS$ for $(\cF,\cP)$ with label and message size $O((ns+\min\{n^2,m\log n\})/t)$ where $n$ is the number of vertices, and $m$ is the number of edges in the graph.
\end{thm}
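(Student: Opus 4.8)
The plan is to build the $t$-$\PLS$ by applying the \emph{label sharing} technique to the standard universal $1$-$\PLS$. Recall that in the one-round universal scheme the prover hands every vertex an identical label: a complete description $\cD$ of the configuration, consisting of the $ns$ bits needed to record all states together with an encoding of the topology. Since vertices carry unique identifiers, the topology can be written either as an adjacency matrix ($n^2$ bits) or as an adjacency list ($O(m\log n)$ bits), so $\cD$ has length $\kappa = O(ns + \min\{n^2, m\log n\})$. The one-round verifier checks (i) that $\cD$ correctly describes the checking vertex's own state and incident edges, (ii) that every neighbor holds the same $\cD$, and (iii) that $\cP(\cD)=\True$; soundness follows because the neighbor-agreement checks force all vertices of a connected configuration to share a single $\cD$, which the local checks certify to equal the true configuration.

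First I would have the prover partition $\cD$ into $t$ consecutive shares $\cD_1,\dots,\cD_t$, each of length $O(\kappa/t)$, tag each with its index, and assign to every vertex a single share as its label; this immediately gives label size $O(\kappa/t)$. The hypothesis $t\in\Omega(D)$, with $D$ governing the diameter, is exactly what guarantees that $t$ rounds suffice for information to cross the graph. The verifier uses a prover-supplied BFS tree to gather the shares to the root by pipelined convergecast and then broadcasts the reassembled description outward, again pipelined. Pipelining $t$ shares through a structure of depth $O(t)$ completes in $O(t)$ rounds using messages of size $O(\kappa/t)$, after which every vertex has collected all $t$ shares and reassembled a candidate description $\cD_v$. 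Each vertex then runs the three one-round checks against $\cD_v$, carrying out the neighbor-agreement comparison by exchanging $\cD_v$ with its neighbors in $t$ slices of size $O(\kappa/t)$ over the same rounds.

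Completeness is straightforward: when $\cP(G_s)=\True$ the prover distributes the shares of the genuine $\cD$ together with a valid tree, every vertex reassembles the true description, and all checks pass. For soundness I would argue exactly as in the one-round scheme, and crucially without relying on the reconstruction being correct: whatever shares a (possibly adversarial) prover supplies, if every vertex accepts then the sliced neighbor-agreement checks force all $\cD_v$ to coincide across the connected configuration, the local checks force this common description to equal the actual configuration, and the property check then yields $\cP(G_s)=\cP(\cD)=\True$, contradicting illegality. To obtain label and message size \emph{exactly} $O(\kappa/t)$ while running in $t$ rounds, one reparametrizes by a constant factor, absorbing the $O(1)$ overhead of the convergecast, broadcast, and agreement phases.

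The main obstacle is the soundness analysis of the distributed reconstruction: because messages are only $O(\kappa/t)$ bits, no vertex can transmit its whole reassembled $\cD_v$ in a single round, so I must verify that the \emph{sliced} agreement check still certifies \emph{global} consistency and that a dishonest placement of shares (wrong tags, conflicting values, or a malformed dissemination tree) cannot make different vertices accept different descriptions. The crux is thus to show that the combination of tag-indexed reassembly, local checks, and sliced agreement is exactly as powerful as the full one-round label exchange, and that the dissemination schedule respects the $O(\kappa/t)$ per-round bandwidth without congestion; both reduce to standard tree-pipelining arguments once the share placement is fixed.
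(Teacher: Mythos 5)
Your proposal is correct in its overall architecture and, importantly, it gets the soundness argument exactly right: as in the paper, acceptance is certified not by the reconstruction being faithful but by the combination of (sliced) neighbor-agreement on the reassembled description, local-view consistency, and the final $\cP$-check, so an adversarial share placement cannot help on an illegal instance. Where you genuinely diverge from the paper is in the dissemination mechanism. The paper does not use a prover-supplied BFS tree or a convergecast-to-root-then-rebroadcast: it indexes shares by $\dist(u,v) \bmod (t/4)$ for a single distinguished vertex $v$, so \emph{all vertices at the same distance from $v$ hold the same share}. Pipelining then runs in both directions along the distance gradient (encoded by a $\bmod\ 3$ orientation label), each vertex merely checks that the labels arriving from its upstream neighbors in a given round are equal and forwards one copy, and each vertex reconstructs the description locally from the window of $t/4$ consecutive shares delimited by ``first in block'' markers. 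This makes the congestion analysis trivial --- at most one distinct share per direction per round ever arrives at a vertex --- whereas your convergecast must handle an internal tree node receiving several \emph{distinct} shares from different subtrees in the same round, which forces you into the standard pipelined-convergecast scheduling argument and obliges you to pin down a share-placement rule guaranteeing that all $t$ shares are present and reach the root in $O(t)$ rounds (your ``once the share placement is fixed'' is doing real work here, and is the one point you would need to make fully precise; note also that index tags of $\log t$ bits must be absorbed into the $O(\kappa/t)$ message bound, which holds since $\kappa/t=\Omega(\log n)$). Your route buys modularity --- it reduces to textbook convergecast/broadcast primitives --- while the paper's distance-based share assignment buys a congestion-free, rootless scheme in which reconstruction is symmetric across all vertices; both yield the claimed $O((ns+\min\{n^2,m\log n\})/t)$ label and message size after the same constant-factor rescaling of rounds.
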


In the proof of this theorem we use a known universal $\PLS$~\cite{KKP,GS11,BFP}. Labels consist of the entire representation of the graph configuration. Nodes then verify that they have the same representation, and that it is consistent with its local view. Finally, they verify individually that the label represents a legal configuration. Since every configuration can be represented using $O({ns+\min\{n^2,m\log n\}})$ bits---by listing the state of each vertex and an adjacency matrix or an edge list---this is the label (and message) size of this scheme. 

The idea of the universal $\tPLS$ is to disperse the configuration representation into shares such that each vertex can collect the purported graph configuration from its $t$-neighborhood. 

\begin{proof}[of \theoremref{thm:universal}]
  Let $\cF$ be a family as described in the statement, let $\cP$ be a boolean predicate over $\cF$ and $G_s=(V,E,\varphi : V \to S)\in \cF$. We first describe the scheme. Consider some fixed vertex $v\in V$. For every vertex $u\in V$, let $\dist(u,v)=d$ and define $j \equiv d \mod (t/4)$. Denote $R = (ns+\min\{n^2,m\log n\})$. The \dft{universal label} of $u$, denoted by $c(u)$, consists of:
  \begin{itemize}
  \item a \dft{v-indication} $d_0(u) \in \set{0, 1}$ indicating if $u=v$,
  \item a \dft{first in block} $f(u) \in \set{0, 1}$ indicating if $j=0$,
  \item an \dft{orientation label} $a(u) \in \set{0, 1, 2}$ encodes $(d\ mod \ 3)$, and
  \item a \dft{share of representation} $r(u) \in \set{0, 1}^{(4R) / t}$ which encodes the $j$-th part (out of $t/4$ parts, of length $\frac{4R}{t}$ each) of $G_s$'s representation.
  \end{itemize}
  
  In the first round, each vertex sends its label to all of its neighbors. In the first $t/2$ rounds we use the orientation indicated by the orientation label of each neighbor for an efficient pipelining of labels in two directions. The message of every vertex in each of the first $t/2$ rounds is composed of two parts, one for pipelining of labels towards $v$ and the other for pipelining of labels away from $v$. For every vertex $u \in V$, let $Y_{(-1)}$ be all neighbors $y$ of $u$ with $a(y)\equiv a(u)-1\mod 3$, and let $Y_{(+1)}$ be all neighbors $y$ of $u$ with $a(y)\equiv a(u)+1\mod 3$. The pipelining towards $v$ is done by receiving labels only from $Y_{(+1)}$ and sending labels only to $Y_{(-1)}$. Let $L^i_{(+1)}$ be the set of labels $u$ received in round $i$ from all its $Y_{(+1)}$ neighbors. The vertex $u$ verifies that all non empty labels in $L^i_{(+1)}$ are equal, and sends this label to $Y_{(-1)}$. The pipelining away from $v$ is done similarly, with the roles of $Y_{(-1)}$ and $Y_{(+1)}$ reversed. The distinguished vertex $v$ verifies that it has only $Y_{(+1)}$ neighbors, and in each round all non empty labels in $L^i_{(+1)}$ are equal, and sends this label to all its neighbors. Every vertex $u\neq v$ verifies that during the first $t/2$ rounds it has received from $Y_{(-1)}$ two labels (in two different rounds) with `first in block' indication, $f = 1$. If the first had also `$v$-indication' then $u$ concatenates all `shares of representation' of these labels, in order, excluding the last. Otherwise (the first had no `$v$-indication'), $u$ concatenates all `shares of representation' of these labels, in reverse order, excluding the first. The distinguished vertex $v$ verifies that it has `$v$-indication', `first in block' indication, and `orientation label' $0$, and concatenates the $t/4$ first `shares of representation' it sees, in order (including $r(v)$). Every vertex $u\in V$ considers its concatenation, denoted by $g(u)$, as a representation of a configuration, and verifies that it is consistent with its local view. In the last $t/2$ rounds $u$ verifies that for every neighbor $w$ it holds that $g(w)=g(u)$, by sending $g(u)$ in $t/2$ disjoint shares. Finally, if all verifications succeed, the output of $u$ is whether the configuration represented by $g(u)$ satisfies $\cP$.
  
  The label size is $O(R/t)$. In the first $t/2$ rounds, every message contains exactly two labels, and hence message size is also $O(R/t)$. For every $u$, by definition, $g(u)$ is the concatenation of at most $t/2$ `shares of representation' ($t/2$ rounds, and at most one `share of representation' is concatenated in each round). Therefore, in the last $t/2$ rounds every message size is not more than the size of one `share of representation', which is also $O(R/t)$. So, the label and message size requirements hold.
  
  We now prove the correctness of the scheme.  If all vertices output $\True$, by the last part of the scheme we know that they all have the same representation, and that it is consistent with their local view. Therefore, it must be the case where all vertices hold the correct representation of $G_s$. Since all vertices output $\True$, by construction of the scheme, $\cP(G_s)=\True$. If $\cP(G_s)=\True$ and labels are assigned according to the scheme, we have the following. Denote by $c_j$ the label of a vertex with distance $j$ from $v$. Let $u\in V$ be a vertex and let $\dist(u,v)=d$. In round $i$, by construction of the scheme, $u$ receives from $Y_{(-1)}$ (and $v$ from $Y_{(+1)}$) the label $c_{|d-i|}$. If $d< t/4$, by construction, the first label $u$ receives with `first in block' indication (after less than $t/4$ rounds) is $c_0$. Afterwards it receives $c_1, c_2,\ldots, c_{t/4-1}$ and $c_{t/4}$ which is the second with `first in block' indication. If $d\ge t/4$, the first label $u$ receives with `first in block' indication (after less than $t/4$ rounds) is not $c_0$, and hence has no `$v$-indication'. By construction, it must be $c_{Z}$, where $Z=t/4\cdot k$ for some natural number $k>0$. Afterwards it receives $c_{Z-1}, c_{Z-2},\ldots, c_{Z-t/4+1}$ and $c_{Z-t/4}$ which is the second with `first in block' indication.  It is easy to see that in both cases $u$ constructs the correct representation of $G_s$. Therefore, the equality and local view verifications succeed, and since $\cP(G_s)=\True$, all vertices output $\True$.  
\end{proof}

 
\subsection{Lower Bound Tool}
\label{sec:lower-bound}

We start with some definitions. Although we consider only networks represented by undirected graphs, we will define an orientation on an edge to indicate a specific ordering of its endpoints. We denote by $H(e)$ the head of a directed edge $e$, and by $T(e)$ the tail of $e$.
 
\begin{dfn}[Edge Crossing]
  Let $G=(V,E)$ be a graph, and $e_1,e_2\in E$ be two directed edges. The \emph{edge crossing} of $e_1$ and $e_2$ in $G$, denoted by $\Cross{e_1}{e_2}{G}$, is the graph obtained from $G$ by replacing $e_1$ and $e_2$, by the edges $(T(e_1),H(e_2))$ and $(T(e_2),H(e_1))$.
\end{dfn}

Edge crossings were used many times before, and were formalized as a tool for proving lower bounds of verification complexity in~\cite{BFP}. We now show how to use edge crossing in order to prove lower bounds for label size of $\tPLS$.

\begin{dfn}[Edge $k$-neighborhood]
  Let  $G=(V,E)$ be a graph, and $e=(u,v)\in E$. The \dft{$k$-neighborhood} of $e$ in $G$, denoted by $\Neigh{k}{e}{G}$, is the subgraph $(V',E')$ of $G$ satisfying
  \begin{enumerate}
  \item $w\in V'$ if and only if $w\in V$ and $\min(\dist(w,u),\dist(w,v))\le k$, and
  \item $e'\in E'$ if and only if $e'\in E\cap (V'\times V')$.
  \end{enumerate}
\end{dfn}

\begin{prop}
  \label{prop:tool}
  Let $(\bP,\bV)$ be a deterministic $t$-$\PLS$ for $(\cF,\cP)$ with label size $\abs{\ell}$. Suppose that there is a configuration $G_s\in\cF$ which satisfies $\cP$ and contains $r$ directed edges $e_1,\dots,e_r$, whose $t$-neighborhoods $\Neigh{t}{e_1}{G_s},\dots,\Neigh{t}{e_r}{G_s}$ are pairwise disjoint, contain $q$ vertices each, and there exist $r$ state preserving isomorphisms $\{\sigma_i:V(\Neigh{t}{e_1}{G_s})\to V(\Neigh{t}{e_i}{G_s}), i=1,\dots,r\}$ such that $\sigma_i(H(e_1))=H(e_i)$ and $\sigma_i(T(e_1))=T(e_i)$. If $\abs{\ell}<(\log r)/q$, then there exist $i, j$ with $1\le i< j\le r$ such that every connected component of $\Cross{e_i}{e_j}{G_s}$ is accepted by $(\bP,\bV)$.
\end{prop}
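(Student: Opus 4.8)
The plan is to use a counting (pigeonhole) argument on the labels that the prover assigns to the $t$-neighborhoods of the $r$ edges. The key observation is that the verifier's output at a vertex $w$ after $t$ rounds depends only on the labeled configuration restricted to the $t$-neighborhood of $w$ (its radius-$t$ ball), since information can only travel $t$ hops in $t$ rounds. In particular, the decision made by the two endpoints of each edge $e_i$, and indeed by all vertices in $\Neigh{t}{e_i}{G_s}$, is a function of the labels assigned within $\Neigh{t}{e_i}{G_s}$ together with the (common, by the isomorphism hypothesis) local structure.

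**The counting step.**

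First I would consider the restriction of the prover's labeling to each neighborhood $\Neigh{t}{e_i}{G_s}$. Since each neighborhood contains exactly $q$ vertices and each label has size $\abs{\ell}$, the labeling of a single neighborhood is described by a string of $q\abs{\ell}$ bits, so there are at most $2^{q\abs{\ell}}$ distinct labeled neighborhoods up to the identification given by the isomorphisms $\sigma_i$. By the hypothesis $\abs{\ell} < (\log r)/q$, we have $2^{q\abs{\ell}} < r$, so by pigeonhole there exist two indices $i < j$ such that the labelings of $\Neigh{t}{e_i}{G_s}$ and $\Neigh{t}{e_j}{G_s}$ agree under the isomorphism $\sigma_j \circ \sigma_i^{-1}$. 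That is, transporting the labels of $e_i$'s neighborhood to $e_j$'s neighborhood via the state-preserving isomorphism yields exactly the labels the prover already placed there.

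**Performing the crossing and arguing acceptance.**

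Next I would form the crossing $\Cross{e_i}{e_j}{G_s}$ and argue that every vertex still accepts. The crucial point is that the crossing only alters edges incident to the endpoints of $e_i$ and $e_j$: after replacing $e_i,e_j$ by $(T(e_i),H(e_j))$ and $(T(e_j),H(e_i))$, a vertex $w$ in, say, the component containing $H(e_j)$ now sees a radius-$t$ ball that is isomorphic---as a labeled configuration---to the ball it saw in $G_s$, because $\Neigh{t}{e_i}{G_s}$ and $\Neigh{t}{e_j}{G_s}$ carry identical labels under the isomorphism and the two neighborhoods are disjoint. Hence the swap of tails glues together two pieces that are locally indistinguishable from the original, so no vertex can detect the change within $t$ rounds. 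Since the labeling on $G_s$ was accepting (by completeness, as $\cP(G_s)=\True$), every vertex of $\Cross{e_i}{e_j}{G_s}$ outputs $\True$, and this holds on each connected component separately because acceptance is decided vertex-by-vertex.

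**Main obstacle.**

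The step I expect to require the most care is verifying that the local view of every vertex near the crossing is genuinely preserved. I would need to check precisely that, for a vertex within distance $t$ of one of the swapped edges, its entire radius-$t$ labeled ball in $\Cross{e_i}{e_j}{G_s}$ is isomorphic to a radius-$t$ ball in $G_s$; this uses both the state-preserving isomorphisms $\sigma_i$ matching heads and tails and the fact that the $t$-neighborhoods are pairwise disjoint, so that no vertex's view straddles more than one swapped neighborhood in a way that the labels fail to match. Making this locality argument airtight---and confirming that the verifier, being a $t$-round algorithm, indeed depends only on this radius-$t$ labeled ball---is the heart of the proof.
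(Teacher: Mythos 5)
Your proposal is correct and follows essentially the same route as the paper's proof: a pigeonhole argument on the $q\abs{\ell} < \log r$ bits of labels in each $t$-neighborhood (transported via the isomorphisms $\sigma_i$), followed by the observation that the $t$-round verifier's output at every vertex depends only on its radius-$t$ labeled ball, which is unchanged by the crossing. The locality step you flag as the main obstacle is exactly the step the paper handles by splitting vertices into those inside and those outside $\Neigh{t}{e_i}{G_s}\cup\Neigh{t}{e_j}{G_s}$.
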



\begin{proof}
  Let $(\bP,\bV)$ and $G_s$ be as described above, and assume that $\abs{\ell}<(\log r)/q$. Consider a collection $\{\sigma_i:V(\Neigh{t}{e_1}{G_s})\to V(\Neigh{t}{e_i}{G_s}), i=1,\dots,r\}$ of $r$ state preserving isomorphisms,
  such that $\sigma_i(H(e_1))=H(e_i)$ and $\sigma_i(T(e_1))=T(e_i)$. Order the vertices of $\Neigh{t}{e_1}{G_s}$ arbitrarily. For every $i$, consider the concatenation of labels given by $\bP$ to the vertices of $\Neigh{t}{e_i}{G_s}$, in the order induced by the ordering of $\Neigh{t}{e_1}{G_s}$ and $\sigma_i$. Denote this concatenated string $L_i$. By label size assumption, it holds that $|L_i|<\log r$ for every $i$, and thus there are less than $r$ different options for $L_i$. Therefore, by the pigeonhole principle, there are $i\neq j$ such that $L_i=L_j$. Denote $\Cross{e_i}{e_j}{G_s}$ by $G'_s$, and consider the labels provided by $\bP$ to $G_s$. For every vertex $v\notin \Neigh{t}{e_i}{G_s}\cup \Neigh{t}{e_j}{G_s}$, its $t$-neighborhood is the same in $G_s$ and in $G'_s$. $\Neigh{t}{e_i}{G_s}$ and $\Neigh{t}{e_j}{G_s}$ are disjoint, isomorphic, and have the same states and labels according to some isomorphism which maps $H(e_i)$ to $H(e_j)$ and $T(e_i)$ to $T(e_j)$. Thus, for every vertex $v\in \Neigh{t}{e_i}{G_s}\cup \Neigh{t}{e_j}{G_s}$, its $t$-neighborhood in $G_s$ is the same as in $G'_s$. Since the output of the verifier $\bV$ at each vertex in $G_s$ is only a function of the states and labels at its $t$-neighborhood, if the output of $\bV$ in $G_s$ is $\True$ at all vertices, then the output of $\bV$ in every connected component of $G'_s$ must be $\True$, and the proposition follows.
\end{proof}

The following theorem, which is a consequence of \propref{prop:tool}, is the tool we use to prove lower bounds of label size in a $t$-$\PLS$. 

{\sloppy
\begin{thm}
  \label{thm:tool}
  Let $\cF$ be a family of configurations, and let $\cP$ be a boolean predicate over $\cF$. Suppose that there is a configuration $G_s\in\cF$ which satisfies
  \begin{enumerate}
  \item $\cP(G_s)=\True$,
  \item $G_s$ contains $r$ directed edges $e_1,\dots,e_r$, whose $t$-neighborhoods $\Neigh{t}{e_1}{G_s},\dots,\Neigh{t}{e_r}{G_s}$ are pairwise disjoint, contain $q$ vertices each, and there exist $r$ state preserving isomorphisms $\{\sigma_i:V(\Neigh{t}{e_1}{G_s})\to V(\Neigh{t}{e_i}{G_s}), i=1,\dots,r\}$ such that $\sigma_i(H(e_1))=H(e_i)$ and $\sigma_i(T(e_1))=T(e_i)$, and
  \item for every $i\ne j$, there exists a connected component $H_s$ of $\Cross{e_i}{e_j}{G_s}$ such that $\cP(H_s)=\False$.
  \end{enumerate}
  Then the label size of any $t$-$\PLS$ for $(\cF,\cP)$ is $\Omega((\log r)/q)$.
\end{thm}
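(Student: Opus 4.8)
The plan is to derive \theoremref{thm:tool} from \propref{prop:tool} by a contradiction argument. Suppose there were a deterministic $t$-$\PLS$ $(\bP,\bV)$ for $(\cF,\cP)$ whose label size satisfies $\abs{\ell} < (\log r)/q$. I would first observe that hypotheses (1) and (2) of the theorem are exactly the combinatorial input required by \propref{prop:tool}: condition (1) guarantees that $G_s$ is a legal configuration that $\bP$ can label so as to be accepted, and condition (2) supplies the $r$ directed edges with pairwise disjoint, mutually isomorphic $t$-neighborhoods of $q$ vertices each, together with the head- and tail-preserving isomorphisms. Hence \propref{prop:tool} applies verbatim.

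Applying the proposition with the assumed sub-threshold label size yields a pair of indices $i < j$ for which every connected component of $G'_s := \Cross{e_i}{e_j}{G_s}$ is accepted by $(\bP,\bV)$; concretely, restricting the labeling that $\bP$ assigns to $G_s$ to each component makes the verifier output $\True$ at every vertex of that component. The next step is to confront this with hypothesis (3) for this very pair $i,j$: there is a connected component $H_s$ of $G'_s$ with $\cP(H_s) = \False$.

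To close the argument I would invoke soundness. Since $\cP$ is a predicate over $\cF$, the statement $\cP(H_s) = \False$ presumes $H_s \in \cF$, so soundness applies to $H_s$ and forbids \emph{any} labeling from making the verifier accept it. But the previous step produced exactly such a labeling---the one induced by $\bP(G_s)$---under which all vertices of $H_s$ output $\True$. This is the desired contradiction, so no complete and sound $t$-$\PLS$ for $(\cF,\cP)$ can have label size below $(\log r)/q$; equivalently, any such scheme has label size $\Omega((\log r)/q)$.

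The substantive content of the argument is carried entirely by \propref{prop:tool}, whose proof shows that when two of the label blocks coincide ($L_i = L_j$, by pigeonhole over the sub-threshold label size), the edge crossing preserves the labeled $t$-neighborhood of every vertex, so the verifier---being a function of $t$-neighborhoods only---cannot tell $G_s$ from $G'_s$. Thus the only real care at the level of the theorem is bookkeeping: matching the single induced labeling produced by the proposition against the universally quantified labeling in the soundness definition (acceptance under one labeling already contradicts $\cP = \False$), and noting the implicit requirement that the relevant components lie in $\cF$ for $\cP$ and soundness to be meaningful. I expect no genuine obstacle beyond these quantifier and membership checks.
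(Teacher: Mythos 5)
Your proposal is correct and matches the paper's intent exactly: the paper states \theoremref{thm:tool} as an immediate consequence of \propref{prop:tool} without writing out the derivation, and your argument (sub-threshold label size yields a crossed configuration all of whose components are accepted, while hypothesis (3) supplies an illegal component, contradicting soundness) is precisely that omitted step. The quantifier and $\cF$-membership bookkeeping you flag is handled appropriately.
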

}


\section{Acyclicity}
\label{sec:acy}
In this section we focus on the acyclicity property, and give tight $t$-$\PLS$ lower and upper bounds. The lower bounds of \subsectionref{sec:acy-lb} hold in the computational model where vertices have unique identifiers, and the labels are allowed to depend on the ID of a vertex. The upper bounds presented in Subsections~\ref{sec:certificate-upper-bound} and~\ref{sec:recursive} still apply in a weaker computational model where vertices do not have unique IDs. 

\begin{dfn}[Acyclicity]
Let $\cF$ be the family of all connected graphs. Given a graph configuration $G_s\in \cF$,  $\acy(G_s)=\True$ if and only if the underlying graph $G$ is cycle free.
\end{dfn}

\subsection{Lower Bound for \texorpdfstring{$\acy$}{acyclic}}
\label{sec:acy-lb}

\begin{thm}
  \label{thm:acyclic-lower-bound}
  Every scheme which verifies $\acy$ in $t$ communication rounds requires labels of size $\Omega\left((\log n) / t\right)$.
\end{thm}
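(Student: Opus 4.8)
The plan is to apply \theoremref{thm:tool} directly: I need to exhibit, for each $t$, a single acyclic configuration $G_s$ on $n$ vertices that contains $r = n^{\Omega(1)}$ directed edges whose $t$-neighborhoods are pairwise disjoint, mutually isomorphic (state- and head/tail-preserving), each of size $q = O(t)$, and such that crossing any two of the marked edges necessarily creates a cycle. With $q = O(t)$ and $r$ polynomial in $n$, \theoremref{thm:tool} yields a lower bound of $\Omega((\log r)/q) = \Omega((\log n)/t)$, as required. So the entire argument reduces to constructing the right acyclic instance.

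The natural candidate is a long path (or a graph built from many disjoint copies of a short path glued into a tree) so that acyclicity holds but is "globally fragile." Concretely, I would take a path $P$ on $n$ vertices $u_0, u_1, \ldots, u_{n-1}$ and orient every edge $e_k = (u_k, u_{k+1})$ consistently along the path. Choose the marked edges $e_{i_1}, e_{i_2}, \ldots, e_{i_r}$ to be equally spaced along the path, with spacing greater than $2t+1$ so that their $t$-neighborhoods (each just a subpath of $2t+2$ vertices centered on the edge, hence $q = O(t)$ vertices) are pairwise disjoint. Because the path is vertex-transitive away from its endpoints, translation along the path gives the required state-preserving isomorphisms $\sigma_i$ carrying $e_{i_1}$ to $e_{i_s}$ and matching heads to heads, tails to tails. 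To keep the states literally identical I would either work in the ID-free setting or, since the theorem must hold even with IDs, argue that the adversary is free to present an unlabeled path and the isomorphisms only need to preserve states (not labels), which a plain path satisfies.

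The crucial and least routine step is condition (3): I must verify that crossing any two marked edges $e_i, e_j$ produces a component containing a cycle. Recall $\Cross{e_i}{e_j}{G_s}$ replaces edges $(T(e_i),H(e_i))$ and $(T(e_j),H(e_j))$ by $(T(e_i),H(e_j))$ and $(T(e_j),H(e_i))$. On a single oriented path with $i < j$, this reconnection splices the segment between the two cut points into a loop: the subpath from $H(e_i)$ to $T(e_j)$ gets its two ends joined by the new edge $(T(e_j),H(e_i))$, forming a cycle on the vertices strictly between the two marked edges. I would check the endpoint bookkeeping carefully to confirm the cycle lies in one connected component, and hence $\acy$ fails on that component. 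The main obstacle is exactly this edge-crossing case analysis, since I must ensure that \emph{every} pair $i \neq j$ yields a cycle (not merely some pairs) and that the disjointness and isomorphism hypotheses are simultaneously met by the same spacing choice; a single path handles all three conditions cleanly, which is why I favor it over more elaborate tree constructions.

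Finally I would set the parameters: with spacing $\Theta(t)$ we fit $r = \Theta(n/t)$ marked edges, so $\log r = \Theta(\log n - \log t) = \Theta(\log n)$ for $t$ sublinear, and $q = \Theta(t)$, giving the bound $\Omega((\log r)/q) = \Omega((\log n)/t)$. I expect the only genuine care needed is (i) confirming the crossing always creates a cycle for an arbitrary pair, and (ii) confirming the $t$-neighborhoods are truly disjoint and isomorphic under the path translations, both of which are direct but must be stated precisely to invoke \theoremref{thm:tool}.
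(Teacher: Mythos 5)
Your proposal is correct and matches the paper's proof essentially verbatim: the paper also takes the $n$-vertex path with empty states, marks $r=\lfloor n/(2t+2)\rfloor-1$ directed edges spaced $2t+2$ apart so that each $t$-neighborhood has $q=2t+2$ vertices and the neighborhoods are disjoint and translation-isomorphic, observes that crossing $e_i$ and $e_j$ produces a connected component that is a cycle on the intermediate vertices, and invokes \theoremref{thm:tool} to get $\Omega((\log r)/q)=\Omega((\log n)/t)$.
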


\begin{proof}
  We will show a configuration as described in \theoremref{thm:tool}, with $r=\Omega\left(n/t\right)$ and $q=O(t)$, to derive the stated lower bound on label size of any scheme that verifies $\acy$. Let $G_s$ be the $n$-vertex path $v_0-v_1- \dots -v_{n-1}$ where all states are the empty string. Obviously $\acy(G_s)=\True$. Let $r=\left\lfloor n/(2t+2)\right\rfloor-1$, and consider the set $\{e_i=(v_{(2t+2)i},v_{(2t+2)i+1})\mid 1 \le i \le r\}$ of $r$ directed edges. Each $\Neigh{t}{e_i}{G_s}$ contains exactly $2t+2$ vertices, and thus $q=2t+2$. Every pair of $t$-neighborhoods $\Neigh{t}{e_i}{G_s}$ and $\Neigh{t}{e_j}{G_s}$, for $i\not= j$, is disjoint since the distance between $e_i$ and $e_j$ is at least $2t+1$. 
  For every $i < j$, $\Cross{e_i}{e_j}{G_s}$ contains exactly two connected components. One of them is the cycle $H_s=v_{qi+1}-v_{qi+2}-\dots - v_{qj} -v_{qi+1}$ where all its edges are marked. By definition, $\cP(H_s)=\False$. Hence, the conditions of \theoremref{thm:tool} are satisfied, and the lower bound follows.
\end{proof}

\subsection{Upper Bound for \texorpdfstring{$\acy$}{acyclic}}
\label{sec:certificate-upper-bound}

In this section, we describe a $\tPLS$ for $\acy$ which matches the lower bound presented in \theoremref{thm:acyclic-lower-bound}. 

\begin{thm}
  \label{thm:certificate-upper-bound}
  Suppose $G = (V, E)$ is a graph with diameter $D$. For every $t \leq \min\set{\log n, D}$, there exists an $O(t)$-$\PLS$ for $\acy$ with label and messages of size $O((\log n) / t)$. Further, the verifier $\bV$ uses space of size $O((\log n) / t)$.
\end{thm}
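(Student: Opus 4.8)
The plan is to start from the standard single-round distance-labeling $\PLS$ for $\acy$ and then distribute its labels across the tree using the pipelining technique of \theoremref{thm:universal}. Recall the $1$-$\PLS$: since $\cF$ consists of connected graphs, $\acy(G_s)=\True$ means $G$ is a tree. The prover roots the tree at a vertex $r$ and gives each $v$ its distance $d(v)=\dist(v,r)$, encoded in $k=\lceil\log n\rceil$ bits. Each vertex then verifies, using its neighbors' distances, that (i) a unique vertex has distance $0$, (ii) every edge joins vertices whose distances differ by exactly $1$, and (iii) every non-root vertex has exactly one neighbor with strictly smaller distance. A short counting argument shows these checks pass on a connected graph if and only if it is a tree: every edge has a unique lower endpoint, condition (iii) makes each non-root vertex the higher endpoint of exactly one edge, so the edge count is exactly $n-1$. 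This scheme has label size $O(\log n)$ and is our starting point.

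To reduce the label and space to $O((\log n)/t)$, I would not store $d(v)$ at $v$; instead I would distribute its high-order part along the root-to-$v$ path, mirroring the universal scheme. Partition the path into consecutive blocks of $\Theta(t)$ vertices by distance, so $v$ lies in block $a(v)=\lfloor d(v)/\Theta(t)\rfloor$ at offset $b(v)=d(v)\bmod\Theta(t)$. Since $a(v)\le n/\Theta(t)$ has at most $k$ bits and the block holds $\Theta(t)$ vertices, the prover can write the binary representation of $a(v)$ across the block at $O((\log n)/t)$ bits per vertex (this is exactly where $t\le\log n$ is used, to make $\Theta(t)\cdot O((\log n)/t)\ge k$, while $t\le D$ guarantees the path is long enough to pipeline over). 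Each vertex additionally receives the $O(1)$ structural bits of \theoremref{thm:universal}: a root indication, a ``first in block'' marker, and an orientation label encoding $d(v)\bmod 3$ so labels can be pipelined coherently toward and away from $r$. Over $O(t)$ rounds $v$ collects the shares of its own and adjacent blocks; from the ``first in block'' markers it recovers its offset $b(v)$ by counting rather than storing it, and from the collected shares it reconstructs $a(v)$. This lets $v$ produce the bits of $d(v)$ and of its neighbors' distances most-significant-bit first, and perform checks (i)--(iii) in a streaming fashion: an MSB-first automaton decides ``differs by exactly one'' and ``strictly smaller'' using $O(1)$ state beyond the $O((\log n)/t)$-bit chunk in the current message window, so label, messages, and computation space are all $O((\log n)/t)$, and the full distance is never materialized.

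Completeness is immediate: on a genuine tree with honestly assigned shares, the pipelining succeeds, every $v$ reconstructs its true distance, and (i)--(iii) hold, so all vertices output $\True$. The main obstacle is \textbf{soundness}, since the prover now controls all shares, markers, and orientation bits and so has many more ways to hide a cycle. I would handle this in two stages. First, show that the equality checks built into the pipelining---each vertex verifies that the labels arriving from its ``$+1$''/``$-1$'' neighbors agree, and that block boundaries and orientations are locally consistent---force all accepting executions to induce a single, globally coherent reconstructed distance function $\tilde d$ on $G$, with adjacent vertices reconstructing mutually consistent values. Second, once $\tilde d$ is pinned down, the streaming checks (i)--(iii) are precisely the conditions of the $1$-$\PLS$ applied to $\tilde d$, so the counting argument forces $G$ to have $n-1$ edges and hence be acyclic. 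The delicate point, where I expect most of the effort to go, is the first stage: arguing that no assignment of shares can make the local reconstructions simultaneously pass the consistency checks yet disagree globally, which requires tracking how an inconsistency along a would-be cycle must surface at some vertex's comparison within its $O(t)$-round view. Combined with \theoremref{thm:acyclic-lower-bound}, the resulting $O((\log n)/t)$ bound is optimal.
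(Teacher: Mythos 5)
Your high-level plan---shattering the $O(\log n)$-bit distance-to-root labels into $O((\log n)/t)$-bit shares spread over blocks of $\Theta(t)$ vertices along the orientation---is the same as the paper's, but the verification mechanism you propose is different, and the difference is exactly where your argument has a gap. You have each vertex \emph{reconstruct} its own and its neighbours' distances from pipelined shares and then run a streaming MSB-first comparison; soundness then rests entirely on your ``first stage,'' namely that no adversarial assignment of shares, block boundaries, and orientation bits can pass all the local pipelining and equality checks while the reconstructed function $\tilde d$ is globally incoherent (in particular, wraps around a cycle). You explicitly defer this step, but it is the entire content of the theorem: the adversary controls which vertex holds which share, so one must rule out, for example, two vertices of the same block reconstructing different block indices, or the pipelining on an oriented cycle---which has no root and no canonical ``towards $v$ / away from $v$'' direction of the kind the universal scheme relies on---delivering a locally consistent but cyclically shifted set of shares. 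Nothing in the proposal establishes this, so as written the proof of soundness is missing.

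The paper sidesteps reconstruction altogether. It defines a purely combinatorial \emph{correct acyclicity labeling} (orientation labels mod $3$, head/mid/tail block markers, a $2$-coloring of blocks, and distance shares) and proves in \propref{prop:acyclicity-certificate} that such a labeling exists iff $G$ is acyclic: on a purported cycle the orientation labels force an oriented cycle, the block markers tile it with blocks $B_0, B_1, \ldots$, and the requirement $D(B') = D(B) + t$ for child blocks yields $D(B_i) < D(B_{i+1}) < \cdots < D(B_j) = D(B_i)$ by pigeonhole---a contradiction that never requires any vertex to know a full distance. The verifier then checks the labeling \emph{positionally}: each vertex forwards its parent's share downstream for $t$ rounds, so vertex $v_i$ of a block ends up holding the share $d(w_i)$ of the corresponding vertex of its parent block; the heads then perform a distributed increment-by-$t$ with carry propagation, and each vertex asserts that its own share equals the incremented received share. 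No coherence lemma for reconstructions is needed, and the space bound is immediate. If you pursue your route you would essentially have to reprove a version of this positional-consistency argument. Two smaller points: recovering the offset ``by counting'' needs a $\log t$-bit counter, which exceeds $O((\log n)/t)$ when $t$ is close to $\log n$ (the paper explicitly assumes a round-counting oracle for this reason, in the remark following the theorem statement); and the paper's scheme deliberately avoids unique identifiers and any reliance on a single distinguished vertex, which your adaptation of the universal scheme's pipelining would reintroduce.
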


\begin{rem}
  In this subsection, we assume that each vertex has access to some means of deciding (correctly) when $t$ communication rounds have elapsed. This can be achieved either by allowing each vertex a $\log t$ bit counter, or by giving each vertex access to an oracle which alarms when (an integer multiple of) $t$ rounds have elapsed. We discuss the necessity of this assumption in \subsectionref{sec:recursive}, and prove that such information is necessary for any distributed algorithm with super-constant and sub-linear run-time in \appref{sec:impossibility}.
\end{rem}

The following scheme can be used to verify that the graph contains no cycles using labels of size $O(\log n)$ in a single round. The label of a vertex $v$ consists of an integer $d(v)$ which encodes the distance from $v$ to a root vertex (which has $d(v) = 0$). Vertices verify the correctness of the labels in a single communication round. If $v$ satisfies $d(v) = 0$ (i.e., $v$ is a root), then it accepts the label if all of its neighbors $w$ satisfy $d(w) = 1$. If $v$ satisfies $d(v) \not= 0$ then $v$ verifies that $v$ has exactly one neighbor $u$ with $d(u) = d(v) - 1$ while all other neighbors $w$ satisfy $d(w) = b(v) + 1$. This scheme is used, for example, in~\cite{APV91,IL94,AO94}. The correctness of the scheme is a consequence of the following definition and lemma. 

\begin{dfn}
  \label{dfn:cyclic-label}
  Suppose $G = (V, E)$ is a graph and $L = \set{0, 1, \ldots, s-1}$ with $s \geq 3$. We call function $\ell : V \to L$ an \dft{$s$-cyclic labeling} of $G$ if for every $v \in V$, $v$ has at most one neighbor $P(v)$---the \dft{parent} of $v$---such that $\ell(P(v)) \equiv \ell(v) - 1 \mod s$, while the $v$'s other neighbors $w$ satisfy $\ell(w) \equiv \ell(v) + 1 \mod s$. 
\end{dfn}

\begin{rem}
  \label{rem:cyclic-label}
  An $s$-cyclic labeling induces an orientation on $G$ where an edge $(u, v)$ is oriented such that $u = P(v)$. That is, each edge is oriented away from the parent.
\end{rem}

\begin{lem}
  \label{lem:cyclic-label}
  Suppose $G = (V, E)$ is a connected graph and $\ell$ an $s$-cyclic labeling. Then either $G$ is acyclic or $G$ contains a unique cycle of length $k$, where $s$ divides $k$. Further, if $G$ contains a cycle, $C$, then $C$ is an oriented cycle in the orientation induced by $\ell$, and all oriented paths in $G$ are oriented away from vertices in $C$. 
\end{lem}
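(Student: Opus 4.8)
The plan is to exploit the single structural consequence of \defref{dfn:cyclic-label}: in the orientation of \remarkref{rem:cyclic-label}, every vertex has in-degree at most one. Indeed, writing $\deg^-(v)$ for the in-degree of $v$, the only edges oriented \emph{into} $v$ are those coming from a parent $P(v)$, and by definition $v$ has at most one parent, so $\deg^-(v) \le 1$. (Here the hypothesis $s \ge 3$ is what makes the orientation well defined: it guarantees $\ell(v)-1 \not\equiv \ell(v)+1 \pmod s$, so no neighbor is simultaneously a parent and a child of $v$.) Everything else is graph-theoretic bookkeeping built on this in-degree bound.

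First I would count edges. Since each edge contributes to the in-degree of exactly one endpoint, $\abs{E} = \sum_{v} \deg^-(v) \le \abs{V} = n$, while connectivity gives $\abs{E} \ge n-1$. This forces $\abs{E} \in \set{n-1, n}$, yielding the dichotomy. If $\abs{E} = n-1$ then $G$ is a connected graph with $n-1$ edges, hence a tree, and in particular acyclic --- the first alternative. If $\abs{E} = n$, then $\sum_v \deg^-(v) = n$ together with $\deg^-(v) \le 1$ forces $\deg^-(v) = 1$ for every $v$, and a connected graph with $n$ vertices and $n$ edges (a spanning tree plus one extra edge) contains exactly one cycle $C$; this gives uniqueness and the length $k$.

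Next I would show $C$ is an oriented cycle and that all edges point away from it. Since every vertex has in-degree exactly one, no vertex of $C$ can be a \emph{sink} of the induced cycle orientation (a sink would already have in-degree $2$ from its two incident cycle edges). An orientation of a cycle with no sink has no source either, hence must be consistent, so $C$ is a directed cycle $w_0 \to w_1 \to \dots \to w_{k-1} \to w_0$. Because each $w_i$ already spends its unique in-edge on the cycle, no edge outside $C$ is oriented into a cycle vertex; following parent pointers backward from an arbitrary vertex must eventually repeat, producing a directed cycle that by uniqueness equals $C$, so every vertex is reachable from $C$ along a directed path and all edges are oriented away from $C$. Finally, traversing $C$ and using $\ell(w_{i+1}) \equiv \ell(w_i) + 1 \pmod s$ at each step gives $0 \equiv k \pmod s$, i.e., $s \mid k$.

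The main obstacle is the third step: turning the purely local in-degree constraint into the global statement that the unique cycle is consistently oriented with all trees hanging off it. The no-sink observation is the crux, since it is exactly the place where the ``at most one parent'' condition prevents the orientation of $C$ from reversing. The divisibility claim and the edge count are then routine.
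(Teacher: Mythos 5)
Your proof is correct, and it reaches the dichotomy and the uniqueness claim by a different route than the paper. The paper never counts edges: it takes an arbitrary cycle $C=(v_0,\dots,v_{k-1})$, observes that $\deg_{in}(v_i)-\deg_{out}(v_i)\le 0$ within $C$ while the sum over the cycle is $0$, forces $\deg_{in}(v_i)=\deg_{out}(v_i)=1$ to conclude $C$ is consistently oriented (this is your no-sink argument in additive form), derives $s\mid k$, and only then gets uniqueness as a corollary of the ``all paths leave $C$'' property together with connectivity. You instead get the global structure first: $\abs{E}=\sum_v \deg^-(v)\le n$ combined with $\abs{E}\ge n-1$ yields the tree/unicyclic dichotomy and uniqueness in one stroke, and in the unicyclic case you upgrade ``in-degree at most one'' to ``in-degree exactly one,'' which powers both the no-sink argument and the backward-parent-walk argument for the orientation-away claim. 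Your approach buys a cleaner, purely counting-based proof of uniqueness and of the acyclic/unicyclic split (the standard ``spanning tree plus one edge'' fact), at the cost of invoking connectivity twice and needing the small extra observation that the backward parent walk cannot close a $2$-cycle (which again uses $s\ge 3$); the paper's approach is more local and transfers more directly to the setting of \propref{prop:acyclicity-certificate}, where the same ``follow the oriented cycle and derive a contradiction'' template is reused. Both proofs hinge on the identical key fact that the orientation of \remarkref{rem:cyclic-label} gives every vertex in-degree at most one, and your parenthetical remark on why $s\ge 3$ makes the orientation well defined is a detail the paper leaves implicit.
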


\begin{proof}
  Suppose $C = (v_0, v_1, \ldots, v_{k - 1})$ is a cycle in $G$. In the orientation described in \remarkref{rem:cyclic-label}, every vertex has in-degree at most $1$. Let $\deg_{in}(v_i)$ denote the in-degree of $v_i$ in $C$ and similarly $\deg_{out}(v_i)$ is $v_i$'s out-degree in $C$. Then $\deg_{in}(v_i) - \deg_{out}(v_i) \leq 0$ for all $v_i$. However, we must have $\sum_{i} \deg_{in}(v_i) - \deg_{out}(v_i) = 0$, implying that in fact $\deg_{in} v_i= \deg_{out}(v_i) = 1$ for all $i$. Thus, $C$ is an oriented cycle. As a consequence, for all $i$, either $\ell(v_i) \equiv \ell(v_{i+1}) + 1 \mod s$ or $\ell(v_i) \equiv \ell(v_{i+1}) - 1 \mod s$. In the former case, we have $\ell(v_{k - 1}) - \ell(v_{0}) \equiv k \equiv 0 \mod s$, implying that $s$ divides $k$. In the latter case, $\ell(v_{k - 1}) - \ell(v_{0}) \equiv - k \equiv 0 \mod s$, and the desired result holds.

  Since every vertex $v_i \in C$ has in-degree $1$ in $C$, all edges that leave $C$ must be oriented away from vertices in $C$. Similarly, any path $w_0, w_1, \ldots, w_j$ with $w_0 \in C$ and $w_i \notin C$ for $i \geq 1$  must be oriented away from $C$. Thus no such path may lead to another cycle $C'$, nor could another cycle $C'$ share a path with $C$. Thus since $G$ is connected $C$ must the unique cycle.
\end{proof}

To achieve labels of length $O((\log n) / t)$ for $\acy$, we simulate the ``distance-to-root'' scheme described above. The idea is to break the $O(\log n)$-bit labels indicating the distance to the root into shares of size $O((\log n)/t)$. Unlike the universal scheme described in \subsectionref{sec:universal}, vertices do not reconstruct the $(\log n)$-bit distance-to-root labels directly, but check the labeling is correct distributively. Thus the verifier $\bV$ only uses space linear in the label size.

\begin{figure}[t]
  \includegraphics[width=\textwidth]{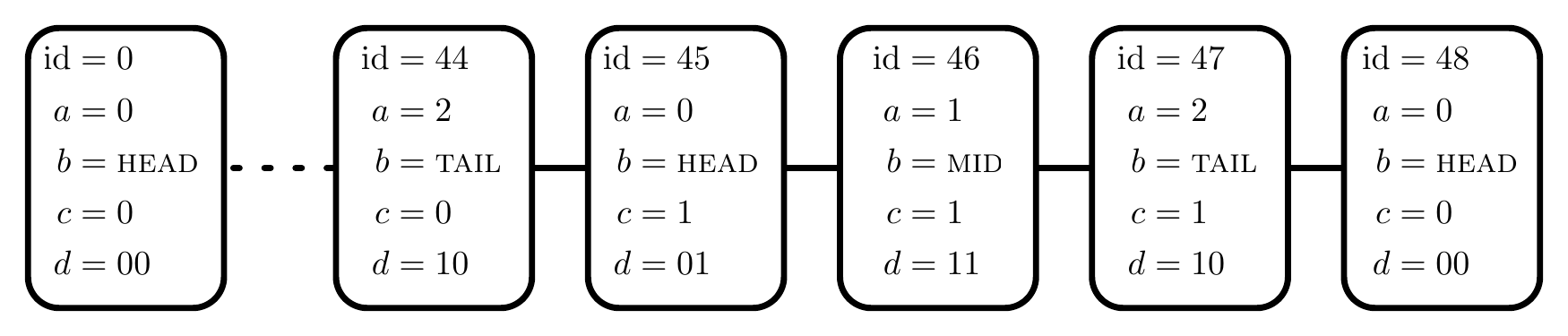}
  \caption{Acyclicity labels for a graph consisting of a path rooted at its left endpoint. We have given the nodes identifiers $0, 1, \ldots$ from left to right, although the labeling need not include the id of the vertices. For this configuration, the orientation labels $a(v)$ simply count the distance from $v$ to the root (with id $0$) modulo $3$. The nodes with ids $45, 46$, and $47$ form a single block, whose head (45) and tail (47) are indictated by the corresponding block labels. The color of this block is $1$ because it is the 15th block from the root ($45 / 3 = 15$), and $15 \equiv 1 \mod 2$. Finally, the concatination of the distance labels in this block is $d(47)d(46)d(45) = 101101$, which encodes the distance of the block's head to the root ($45$) in binary.}
  \label{fig:block-labeling}
\end{figure}

Formally, for a vertex $v$, an \dft{acyclicity label} consists of:
\begin{itemize}
\item an \dft{orientation label} $a(v) \in \set{0, 1, 2}$ which defines an orientation on edges away from the root of the tree, 
\item a \dft{block label} $b(v) \in \set{\bhead, \bmid, \btail}$ which indicates $v$'s position within a block,
\item a \dft{block color} $c(v) \in \set{0,1}$, and
\item a \dft{distance label} $d(v) \in \set{0, 1}^{(\log n) / t}$ which encodes a share of a distance to the root.
\end{itemize}
See Figure~\ref{fig:block-labeling} for an example of correctly formed labels. It is clear that an acyclicity label can be recorded in $O((\log n) / t)$ bits. The semantics of acyclicity labels are described below.
\begin{description}
\item[Correct orientation labels] The orientation labels $a(v)$ are correct if every $v \in V$ has at most one neighbor $P(v)$---the \dft{parent} of $v$---such that $a(P(v)) \equiv a(v) - 1 \mod 3$. The remaining neighbors $w$ of $v$---$v$'s \dft{children}---satisfy $a(w) \equiv a(v) + 1 \mod 3$. If $P(v) = \varnothing$, we call $v$ a \dft{root}. Correct orientation labels induce an orientation on $G$ where the oriented edges $(v, w)$ satisfy $a(w) \equiv a(v) + 1 \mod 3$. Thus, edges are oriented away from roots (if any).
\item[Correct block labels] Block labels must be assigned in the following manner
  \begin{enumerate}
  \item $b(v) = \bhead$ if and only if either $P(v) = \varnothing$ or $b(P(v)) = \btail$
  \item $b(v) = \btail$ if and only if there exists an oriented path of length $t$, $v_0, v_1, \ldots, v_{t-1} = v$ such that $b(v_0) = \bhead$. We refer to such a path as a \dft{block}.
  \item In all other cases, $b(v) = \bmid$.
  \item For every $v$, there exists an oriented path $w_0, w_1, \ldots, w_{k-1} = v$ of length $k < t$ such that $b(w_0) = \bhead$.
  \end{enumerate}
\end{description}

\begin{dfn}
  Let $B = (v_0, v_1, \ldots, v_{t-1})$ be a block. We define the \dft{value} of $B$, denoted $D(B)$, to be the integer whose binary expansion is the concatenation $d(v_{t-1}) d(v_{t-2}) \cdots d(v_{0})$. That is, $v_0$ holds the least significant bits of $D(B)$, while $v_{t-1}$ holds the most significant bits. If $B' = (w_0, w_1, \ldots, w_{t-1})$ is another block, we say that $B$ is the \dft{parent} of $B'$ and $B'$ is a \dft{child} of $B$ if $P(w_0) = v_{t-1}$. If there exists $i$ such that $v_i = w_i$, we say that $B$ and $B'$ \dft{overlap}.
\end{dfn}

\begin{description}
\item[Correct block coloring] The block coloring $c$ is correct if
  \begin{enumerate}
  \item for every block $B$ and $v, w \in B$ we have $c(v) = c(w)$, and
  \item for every blocks $B, B'$ such that $B$ is the parent of $B'$, and $v \in B$, $w \in B'$, we have $c(v) \neq c(w)$.
  \end{enumerate}
\item[Correct distance labels] The distance labels $d$ are correct if
  \begin{enumerate}
  \item for every block, $B = (v_0, v_1, \ldots, v_{t-1})$, $D(B) = 0$ if and only if $v_0$ is a root, and
  \item for every pair of blocks $B$ and $B'$ with $B$ the parent of $B'$, we have $D(B') = D(B) + t$.
  \end{enumerate}
\end{description}

\begin{dfn}[Correct acyclicity labeling]
  \label{dfn:correct-acy-cert}
  Suppose $\ell$ is a family of acyclicity labels for a graph $G = (V, E)$. We say that the family $\ell$ is \dft{correct} if $a$, $b$, $c$, and $d$ are correct orientation labels, correct block labels, correct block colorings, and correct distance labels as described above. 
\end{dfn}

\begin{rem}
  \label{rem:overlap}
  If blocks $B = (v_0, \ldots, v_{t-1})$ and $B' = (w_0, \ldots, w_{t-1})$ overlap, then we must have $w_0 = v_0$ and $D(B) = D(B')$. The first equality holds because each vertex $v_i$ has at most one parent, so if $w_i = v_i$ we must have $w_j = v_j$ for $0 \leq j \leq i$. The second equation holds because either $B$ and $B'$ contain a root, in which case $D(B) = D(B') = 0$, or there is a $B''$ which is the parent of both $B$ and $B'$. In the latter case, $D(B) = D(B'') + t = D(B')$.
\end{rem}

\begin{prop}
  \label{prop:acyclicity-certificate}
  Let $G = (V, E)$ be a graph. Then $G$ is acyclic if and only if it admits a correct labeling $\ell$.
\end{prop}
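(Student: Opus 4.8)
The plan is to prove both directions of the equivalence, with the forward direction ($G$ acyclic $\Rightarrow$ a correct labeling exists) being a routine construction and the backward direction (a correct labeling exists $\Rightarrow$ $G$ acyclic) carrying the real content.

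For the forward direction, since $\cF$ consists of connected graphs, an acyclic $G$ is a tree; I would fix an arbitrary root $r$ and set $a(v) \equiv \dist(r,v) \bmod 3$, declare $v$ a $\bhead$ exactly when $\dist(r,v) \equiv 0 \bmod t$ and a $\btail$ exactly when $\dist(r,v) \equiv t-1 \bmod t$ (and $\bmid$ otherwise), set the color of $v$ to $\lfloor \dist(r,v)/t \rfloor \bmod 2$, and within each full block distribute the binary encoding of its head's distance among the distance labels so that $D(B) = t\lfloor \dist(r,v)/t\rfloor$. Checking the four correctness conditions is then mechanical: the tree structure makes $a$ a correct ($3$-cyclic) orientation; heads and tails fall at the right residues, so the if-and-only-if block conditions and condition~4 hold because every vertex's nearest head-ancestor lies within $t-1$ oriented steps; consecutive blocks differ by one in block index and hence in color parity; and a parent block's head is exactly $t$ closer to $r$, giving $D(B') = D(B) + t$. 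Distance labels in the partial blocks near leaves are unconstrained and may be set arbitrarily.

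For the backward direction I would argue by contradiction. Correct orientation labels are precisely a $3$-cyclic labeling in the sense of \defref{dfn:cyclic-label}, so \lemmaref{lem:cyclic-label} applies: if $G$ is not acyclic it contains a unique oriented cycle $C = (v_0, \ldots, v_{k-1})$ with $3 \mid k$, every cycle vertex has its unique parent on $C$ (so no vertex of $C$ is a root), and all oriented paths lead away from $C$. I would then show the block labels are forced to be periodic along $C$. A $\bhead$ exists on $C$: condition~4 gives every $v \in C$ a backward oriented path of length $<t$ to a head, and since the backward path from a cycle vertex stays on $C$, that head lies on $C$. Starting from such a $\bhead$, the $\btail$ and $\bmid$ conditions force the next $t$ consecutive vertices of $C$ to read $\bhead, \bmid, \ldots, \bmid, \btail$, after which the successor of the tail is again a $\bhead$ because its parent is a tail. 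Hence heads occur exactly every $t$ vertices, so $t \mid k$, and $C$ is tiled by $p = k/t$ pairwise disjoint blocks $B_0, B_1, \ldots, B_{p-1}$ forming a closed parent--child chain (the head of $B_{i+1}$ has the tail of $B_i$ as its parent, indices mod $p$).

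The contradiction then comes from the distance labels: correct distance labels give $D(B_{i+1}) = D(B_i) + t$ for every $i$ around the cycle, so iterating once around the closed chain yields the integer identity $D(B_0) = D(B_0) + pt$, impossible since $p, t \ge 1$. (The block coloring gives an independent obstruction when $p$ is odd, but the distance argument settles all cases.) The main obstacle is the periodicity step: one must combine the \emph{if-and-only-if} form of the block-label conditions with \lemmaref{lem:cyclic-label}'s conclusion that oriented paths lead away from $C$ — so that backward traversal from a cycle vertex stays on $C$ and unique parents are used — in order to rule out spurious heads or tails that would spoil the clean tiling; \remarkref{rem:overlap} further guarantees that the blocks along $C$ do not overlap and hence genuinely partition it.
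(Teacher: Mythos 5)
Your proposal is correct and follows essentially the same route as the paper: the same distance-to-root construction for the forward direction, and for the converse the same contradiction from a chain of blocks inside the oriented cycle whose values $D(B_i)$ strictly increase by $t$ yet must close up. The only cosmetic difference is that you establish the closed chain by explicitly tiling the cycle (showing $t \mid k$), whereas the paper gets the repetition $B_i = B_j$ directly by pigeonhole on the inductively defined child blocks; both yield the same impossible identity for $D$.
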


\begin{proof}
  If $G$ is acyclic, then we can form labels $\ell$ in the following way. Choose an arbitrary vertex $u$ to be the root. For all $v$ define $d'(v) = \dist(v, u)$ (the length of the unique path from $v$ to $u$), and take $a(v) = d'(v) \mod 3$. Define $b(v)$ by $b(v) = \bhead$ if $d'(v) \equiv 0 \mod t$, $b(v) = \btail$ if $d'(v) \equiv -1 \mod t$, and $d(v) = \bmid$ otherwise. Finally, assign distance labels $d(v)$ in such a way that in each block $B$ with first element $v_0$, $D(B) = d'(v_0)$. It is easy to verify that these labels $\ell$ constructed in this way will satisfy all the provisions of \defref{dfn:correct-acy-cert}.

  Conversely, suppose $G$ admits a correct family of acyclicity labels. Suppose towards a contradiction that $C = (w_0, w_1, \ldots, w_{k - 1})$ is a cycle. Since the orientation labels $a(v)$ are correct (hence form a $3$-cyclic labeling), $C$ must be an oriented cycle (as in the proof of \lemmaref{lem:cyclic-label}). The final provision in the correctness of $b$ and the fact that each vertex $w_i$ has a unique parent guarantee some $w_i$ must have $b(w_i) = \bhead$. Without loss of generality, assume that $b(w_0) = \bhead$, and let $B_0$ be the block containing $w_0$ and contained in $C$. Inductively define blocks $B_1, B_2, \ldots \subseteq C$ such that $B_{i+1}$ is a child of $B_i$. By the pigeonhole principle, we must have $B_i = B_j$ for some $i < j$. However, the correctness of the distance labels implies that $D(B_i) < D(B_{i+1}) < \cdots < D(B_j) = D(B_i)$, a contradiction.
\end{proof}

In order to prove \theoremref{thm:certificate-upper-bound}, by \propref{prop:acyclicity-certificate}, it suffices to show there is a verifier $\bV$ for acyclicity labels which runs in time $O(t)$ using messages and memory of size $O((\log n) / t)$. Verification of the correctness of the orientation labels $a$, block coloring $c$, and conditions 1 and 3 in the correctness of the block labels $b$ can be accomplished in a single communication round with constant communication. Thus, we must verify conditions 2 and 4 in the correctness of the block labels as well as the correctness of distance labels. 

After the initial sharing of labels with neighbors in the first round, the verification algorithm $\Verify(v, a, b, c, d)$ continues as follows (see \algref{alg:verify} for pseudo-code). For $t-1$ steps, each vertex relays the message from its parent to all of its children. At the end of $t$ rounds, each vertex verifies that at some point, it received a message from a head vertex. If a vertex $v$ received a message from a root vertex, it verifies that $d(v) = 0$. Otherwise, let $b(w)$, $c(w)$, and $d(w)$ be labels received by $v$ in the $t$-th round. Then $v$ checks that $b(w) = b(v)$, $c(w) \neq c(v)$. The block heads increment the distance labels $d(w)$ $t$ times, sending carry bits (if any) to their children. When children receive carry bits, they increment their $d(w)$'s accordingly, sending further carry bits to their children. After this incrementation procedure, vertex $v$ verifies that the incremented $d(w)$'s satisfy $d(v) = d(w)$.

\begin{algorithm}
  \caption{$\Verify(v, a, b, c, d)$: Verifies correctness of acyclicity labels.}
  \label{alg:verify}
  \begin{multicols}{2}
  \begin{algorithmic}[1]
    \STATE send $a(v)$, $b(v)$, and $c(v)$ to all neighbors
    \STATE verify correctness of $a$ and $c$, and conditions 1 and 3 in correctness of $b$
    \STATE $\headcheck \leftarrow \FALSE$
    \IF{$b(v) = \btail$}
    \STATE $\iszero \leftarrow \TRUE$
    \ENDIF
    \FOR{i = 1 \TO t-1}
    \STATE $M \leftarrow (b(w), c(w), d(w))$ or $\varnothing$ received from $P(v)$ 
    \IF{$b(w) = \bhead$}
    \STATE $\headcheck \leftarrow \TRUE$
    \ENDIF
    \IF{$b(v) = \btail$}
    \IF{$d(w) \neq 0$}
    \STATE $\iszero \leftarrow \false$
    \ENDIF
    \IF{$i = t - 1$}
    \ASSERT $b(w) = \bhead$
    \ENDIF
    \ENDIF
    \STATE send $M$ to all children \COMMENT{if $v$ is a leaf, ignore}
    \ENDFOR
    \IF{$M = \varnothing$}
    \ASSERT $d(v) = 0$ \COMMENT{head of $v$'s block is root}
    \ELSE
    \FOR{$i = 1$ \TO $t$}
    \STATE $\Increment(d(w), \abs{d(w)}, 1)$
    \ENDFOR
    \ASSERT $b(w) = b(v)$
    \ASSERT $c(w) \neq c(v)$
    \ASSERT $d(w) = d(v)$
    \IF{$b(v) = \btail$}
    \ASSERT $\iszero = \FALSE$
    \ENDIF
    \ENDIF
    \ASSERT $\headcheck = \TRUE$
  \end{algorithmic}
  \end{multicols}
\end{algorithm}

\begin{lem}
  \label{lem:verify}  
  Let $\ell$ be a family of acyclicity labels on a graph $G = (V, E)$. Then $\ell$ is correct if and only if every vertex $v$ accepts in \algref{alg:verify}.
\end{lem}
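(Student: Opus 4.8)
The plan is to prove the two directions of the biconditional separately, with the central technical device being a \emph{pipeline invariant} describing exactly what each vertex receives during the main loop of \algref{alg:verify}. First I would establish, by induction on the iteration counter $i$, that provided the orientation labels $a$ are correct (so that ``parent'' and ``children'' are well defined, as independently verified in line~2), after the $i$-th iteration each vertex $v$ holds as its current message $M$ the triple $(b,c,d)$ of its $i$-th oriented ancestor $P^{(i)}(v)$, or $M = \varnothing$ when the oriented path from $v$ toward the root has fewer than $i$ edges. The inductive step is immediate from the relay rule (line~20): a vertex forwards to its children exactly the message it just received from its parent, and the termination of the chain at a root produces the $\varnothing$ messages. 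This invariant is what lets me translate every assertion in the algorithm into a statement about the labels in a fixed $t$-neighborhood of $v$.

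For the completeness direction (if $\ell$ is correct then every vertex accepts), I would assume the provisions of \defref{dfn:correct-acy-cert} and check each assertion in turn. The line~2 checks pass because $a$, $c$, and conditions~1 and~3 of $b$ are correct by hypothesis. The assertion $\headcheck = \True$ passes because, by condition~4, some head lies within oriented distance $< t$ of $v$ (possibly $v$ itself), whose label the pipeline delivers within the allotted rounds. For a tail vertex, condition~2 places its block head at the oriented distance that the pipeline reaches at the designated round, so the assertion that the arriving message carries $b = \bhead$ holds. The block-color assertion $c(w) \neq c(v)$ follows from the second clause of correct block coloring, and the $M = \varnothing$ branch asserting $d(v) = 0$ is exactly condition~1 of the distance labels applied to the block containing the root. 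The remaining assertion $d(w) = d(v)$, made after the increment loop, holds because $D(B') = D(B) + t$ for a child block $B'$ of a parent block $B$, so adding $t$ to the parent block's shares reproduces the child block's shares position by position.

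For the soundness direction I would argue the contrapositive: if $\ell$ violates any provision of \defref{dfn:correct-acy-cert}, then some vertex fails an assertion. I would proceed condition by condition and exhibit the witnessing vertex together with the assertion it fails. A violation of the correctness of $a$ or $c$, or of conditions~1 or~3 of $b$, is caught in line~2. A violation of condition~4 leaves some vertex with $\headcheck = \False$, which it rejects at the final assertion. A tail whose head is not at the prescribed distance (a failure of condition~2) fails the $b(w) = \bhead$ assertion at the iteration $i = t-1$. Finally, a violation of the distance-label conditions forces a mismatch $d(w) \neq d(v)$ at some vertex: invoking the pipeline invariant and the fact that the increment routine faithfully realizes the addition $D(B)+t$, if $D(B') \neq D(B) + t$ then the share-wise comparison must fail at some position, and likewise the $\iszero$ flag distinguishes root blocks from non-root blocks.

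The hard part will be the distance-label verification, and in particular the distributed increment with carry propagation. I must show that the per-share \Increment\ calls, executed over $O(t)$ rounds with carry bits flowing from block head to block tail (i.e.\ along the orientation), correctly compute the $(\log n)$-bit sum $D(B)+t$ while the result stays aligned with the child block's shares, so that ``$d(w)=d(v)$ at every vertex of the block'' is equivalent to ``$D(B') = D(B)+t$''. This requires pinning down the exact round-by-round correspondence between a vertex's position within a block and the ancestor whose share it manipulates, verifying that carries never escape a block or corrupt a neighboring (possibly overlapping, cf.\ \remarkref{rem:overlap}) block, and treating carefully the root block, where the ancestor chain terminates and the $M = \varnothing$ branch takes over. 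Once this indexing and carry bookkeeping is in place, both directions reduce to the routine assertion-by-assertion matching sketched above, and the lemma follows, completing, via \propref{prop:acyclicity-certificate}, the proof of \theoremref{thm:certificate-upper-bound}.
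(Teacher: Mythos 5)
Your proposal follows essentially the same route as the paper's proof: the same induction showing that in round $i$ each vertex holds the message of its $i$-th oriented ancestor, followed by the same assertion-by-assertion matching of the algorithm's checks against the provisions of \defref{dfn:correct-acy-cert}, including the observation that after $t$ rounds each $v_i$ in a block $B'$ holds $d(w_i)$ from the parent block $B$, so that incrementing $t$ times and comparing share-wise is equivalent to $D(B') = D(B) + t$. The carry-propagation bookkeeping you flag as the remaining work is asserted without further detail in the paper as well, so your sketch is at the same level of rigor and is correct.
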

\begin{proof}
  By induction, each vertex receives the message from its (unique) $i$-th ancestor in the $i$-th communication round. Therefore, every tail accepts at lines 16--18 if and only if every tail is at (oriented) distance $t-1$ from a head. Similarly, every vertex $v$ is at (oriented) distance $i_v < t$ from a head if and only if it accepts at line 35 (see lines 9--11). Thus, the block labels are correct if and only if every vertex accepts at lines 2, 17, and 35.

  Note that $b(w) = \varnothing$ if and only if the head of the block containing $v$ is a root. Thus, every vertex accepts at line~23 if and only if all blocks $B$ containing a root satisfy $D(B) = 0$. Conversely, if $B$ does not contain a root, then by the assertion at line~32 (and the check at lines 13--15), then $D(B) \neq 0$. Thus the checks at lines 23 and 32 are satisfied if and only if condition 1 in the correctness of distance labels is satisfied. 

  Suppose block $B = (w_0, \ldots, w_{t-1})$ is the parent of $B' = (v_0, \ldots, v_{t-1})$, then the distance label received  by each $v_i$ is $d(w_i)$. Thus, after incrementing the labels $d(w_0) d(w_1) \cdots d(w_{t-1})$ $t$ times, the incremented labels will have value $D(B) + t$. Therefore, all vertices in $B'$ accept at line 31 if and only if $D(B') = D(B) + t$, if and only if condition 2 of correct distance labels is satisfied. 
\end{proof}

\begin{proof}[of \theoremref{thm:certificate-upper-bound}]
  \lemmaref{lem:verify} implies that the $\Verify$ routine (\algref{alg:verify}) is a correct verifier for acyclicity labels. Thus we must only argue that $\Verify$ achieves the claimed time, space, and communication bounds. In each communication round, each vertex broadcasts a single label (in line 20) or a single bit (in $\Increment$) to its neighbors. Thus, the communication in each round is $O((\log D) / t)$ per edge. In each iteration of the algorithm, each vertex stores at most a constant number of labels, hence the memory usage is $O((\log D) / t)$ as well. Finally, the overall run-time is $3 t$. The label sending procedure in lines 7--21 is accomplished in $t$ rounds, while the incrementation procedure in lines 25--7 requires at most $2 t$ rounds: $t$ rounds where the head vertices increment, and another $t$ to propagate carries. In particular, the run-time is $O(t)$.
\end{proof}


\subsection{Recursive Acyclicity Checking}
\label{sec:recursive}

The scheme described in \subsectionref{sec:certificate-upper-bound} gives asymptotically optimal label size for $t \leq \log n$. Further, the communication per round and local memory usage is linear in the label size. However, the scheme above crucially requires each vertex to be given a truthful representation of the parameter $t$. In fact, for $\omega(1) \leq t \leq o(n)$, it is necessary for the vertices to be given some truthful information about $t$ (see \appref{sec:impossibility}). In this subsection, we describe a verifier for $\acy$ that only assumes that the space provided to each processor is $O(\log^* n)$. The tradeoff is that our algorithm runs in time which may be linear in $n$ in the worst case.

\begin{thm}
  \label{thm:recursive-verify}
  There exists a $O(n)$-$\PLS$ for $\acy$ which uses labels and space of size $O(\log^*n)$. In each round, the communication per-edge is $O(1)$.
\end{thm}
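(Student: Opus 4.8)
The plan is to recurse on the block-labeling scheme of \subsectionref{sec:certificate-upper-bound}. As in \propref{prop:acyclicity-certificate}, I root the (connected) graph and certify acyclicity by exhibiting a strictly increasing integer potential along the induced orientation; by the argument of \lemmaref{lem:cyclic-label} any cycle must be oriented, so such a potential fails to exist precisely when the graph has a cycle. The difficulty is that the potential (distance to the root) ranges over $[0,n)$ and cannot be stored in $O(\log^* n)$ bits. Instead of fixing a single block length $t$, I use a sequence of scales $r_0 < r_1 < r_2 < \cdots$ and verify the potential recursively: at level $0$ each vertex stores its distance modulo $r_0$ and checks locally that residues increase by $1$ along each oriented edge; the maximal runs of length $r_0$ are contracted into super-nodes forming a coarser oriented path, and the scheme is applied recursively to certify the quotient potential $\lfloor d/r_0\rfloor$ on this path. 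Level $i$ contracts the level-$(i-1)$ path into super-nodes of $r_i$ consecutive level-$(i-1)$ nodes, each super-node corresponding to $P_i := r_0 r_1 \cdots r_{i-1}$ original vertices.

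The key to the $O(\log^* n)$ bound is the choice of scales. A level-$i$ super-node must carry its residue modulo $r_i$, which is $\log r_i$ bits, but these bits are \emph{spread as shares} across its $P_i$ constituent vertices, costing $\log r_i / P_i$ bits per original vertex at level $i$. Choosing the scales to grow as a tower, $r_0 = 2$ and $r_i = 2^{P_i}$, makes $\log r_i = P_i$, so every level contributes exactly $O(1)$ bits per vertex. Since $P_{i+1} = P_i\,2^{P_i}$ grows like an iterated exponential, the recursion reaches super-nodes spanning the whole graph ($P_k \ge n$) after only $k = O(\log^* n)$ levels, giving total label and working space $O(\log^* n)$. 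As in \theoremref{thm:certificate-upper-bound}, no vertex ever reconstructs a full residue: comparisons of adjacent super-node residues are carried out distributively by streaming shares and propagating carries, exactly the $\Increment$ mechanism of \algref{alg:verify}, so the per-vertex space stays $O(\log^* n)$ and no unique identifiers are required.

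For the complexity bounds I would process the levels one at a time from the finest to the coarsest. Verifying level $i$ requires, for each pair of adjacent super-nodes, streaming an $r_i$-residue (length $P_i$) across the gap between their heads (distance $P_i$) and running the carry-propagating increment check; all super-node pairs run in parallel, so level $i$ costs $O(P_i)$ rounds and only $O(1)$ communication per edge per round. The total running time is $\sum_{i=0}^{O(\log^* n)} O(P_i) = O(n)$, dominated by the top level where $P_i = \Theta(n)$, and the communication stays $O(1)$ per edge per round throughout since only one level is active at a time. Completeness is immediate: from a genuine rooted tree, setting residues from $\dist(\cdot,\text{root})$ at every scale satisfies all local checks. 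Soundness follows the potential argument of \propref{prop:acyclicity-certificate} lifted through the levels: a cycle of length $\ell\le n$ is oriented with $3\mid\ell$, and passing the level-$i$ check forces $r_i \mid (\ell/P_i)$, hence inductively $P_k\mid \ell$ for every level $k$; taking $k$ with $P_k\ge n\ge\ell\ge 1$ yields a contradiction.

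The main obstacle I anticipate is not the divisibility accounting but making the recursive structure \emph{self-certifying} under an adversarial prover without a global clock or identifiers. The verifier must confirm from the labels alone that, at each level, the blocks tile the path correctly, that a vertex's stored bits really form a valid share of its super-node's residue, and that the levels are nested consistently, all while detecting level boundaries (which play the role that the externally supplied parameter $t$ played in \theoremref{thm:certificate-upper-bound}). I would encode head/tail markers and color bits analogous to $b$ and $c$ at every scale and argue, by induction on the level, that any labeling surviving all local checks encodes genuinely increasing residues at each scale; orchestrating the timing so that level $i$'s $O(P_i)$-round check begins only after level $i-1$ has certified the contracted path---without any vertex knowing $P_i$ in advance---is the most delicate part of the construction.
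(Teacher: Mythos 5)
Your architecture is genuinely different from the paper's. You build an \emph{increasing} tower of scales $P_1 < P_2 < \cdots$ and store, at every scale, a residue of the distance-to-root, deriving soundness from a divisibility argument ($P_k \mid \ell$ for any cycle length $\ell$). The paper instead keeps the distance only at the top level (blocks of length $t_1$, one bit per vertex, exactly as in \subsectionref{sec:certificate-upper-bound}) and uses the recursion for a single purpose: \emph{counting}. To verify that a level-$i$ block really has length $t_i$ without being told $t_i$, the vertices bounce a token and maintain a $\log t_i$-bit counter stored in shares across a level-$(i+1)$ sub-block of length $t_{i+1} = \lfloor \log t_{i}\rfloor$; the block lengths therefore \emph{decrease} logarithmically and become $O(1)$ after $\log^* n$ levels, at which point correctness is locally checkable and the induction bottoms out (\lemmaref{lem:recursive-acy-verify}, \algref{alg:r-verify-label}). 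Soundness then follows from strict monotonicity $D(B_i) < D(B_{i+1}) < \cdots$ rather than divisibility. Your tower-of-scales layout is closer in spirit to Awerbuch--Ostrovsky~\cite{AO94}, and the two decompositions are duals of a sort: your level-$(i-1)$ blocks are exactly the right size to host the counter needed to verify level $i$'s tiling, just as the paper's level-$(i+1)$ blocks host the counter for level $i$.

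There is, however, a genuine gap, and it is the one you flag yourself: you never construct the mechanism by which the verifier certifies, with $O(1)$ bits of state per level and no global clock, that the blocks at each scale tile and nest correctly, nor when one level's check has finished so that the next may begin. This is not a routine detail to be deferred---it is the entire technical content of the paper's proof (the $\Count$, $\Bounce$, $\Echo$, and $\Send$ subroutines with their overflow checks, the $\Tleaf$ tokens handling incomplete blocks that end at leaves, and the $\Tstop$ synchronization), and it is precisely the obstruction that \theoremref{thm:certificate-upper-bound} sidesteps by assuming $t$ is supplied truthfully and that \appref{sec:impossibility} shows cannot simply be wished away. Until you exhibit these subroutines (for instance by arguing that your level-$(i-1)$ blocks can host the level-$i$ counters and that overflow detection bounds every level's running time by $O(\min(P_i,n))$), completeness, soundness against an adversarial prover, and the $O(n)$ worst-case time all remain unproven. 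Two smaller points: your soundness argument only yields $P_k \mid \ell$ with $P_k \ge n \ge \ell \ge 1$, which leaves the case $\ell = P_k = n$ open (fix it by insisting on a root with residue $0$ at every level, which a cycle cannot contain, or by taking $P_k > n$); and the claim that the running time is ``dominated by the top level where $P_i = \Theta(n)$'' is wrong as stated, since the first tower value exceeding $n$ may be astronomically larger than $n$---you must instead argue that the top, necessarily incomplete, level terminates within $O(n)$ rounds because its counter can never be incremented more than $n$ times.
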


\begin{rem}
  While verification time in \theoremref{thm:recursive-verify} is $O(n)$ in the worst case, the actual time depends on the labels given to the vertices. In particular, for every acyclic graph $G$ there exists a correct labeling which will be accepted in time $O(\log D)$. Thus there is a tradeoff between the time of the algorithm and the amount of truthful information about $t$ given to the vertices.
\end{rem}

The idea of the algorithm is to simulate the verifier $\Verify$ (\algref{alg:verify}) without the benefit of truthful information about $t$. As before, the labels designate blocks of length $t$. Within each block, the vertices store shares of the distance of that block to the root, where in this case, the shares consist of a single bit. Since $t$ (the length of the block) is not known to the vertices in advance, they must first compute $t$. However, storing $t$ requires $\log t$ bits, so the computed value of $t$ is stored in shares in sub-blocks of length $\log t$.
In order to verify the correctness of the sub-blocks, the vertices must count to $\log t$ using $\log \log t$ bits of memory. This value is again stored in shares in sub-sub-blocks of length $\log \log t$. This process of recursively verifying the lengths of blocks continues until the block length is constant. Thus $\log^* n$ levels of recursion suffice.

Formally, in our recursive scheme, \dft{recursive acyclicity labels} closely resemble those in \subsectionref{sec:certificate-upper-bound}. For each vertex $v$ and each level $i = 1, 2, \ldots, k = \log^* n$, we have an associated block label $b_i(v)$ and block color $c_i(v)$. We refer to the labels associated to each $i$ as a \dft{level}, denoted $L_i$. The top level $L_1$ additionally contains orientation labels, $a(v)$ and distance labels $d(v)$ for each vertex. Each level $i$ has an associated length, denoted by $t_i$. We emphasize that the $t_i$ are not initially known to the vertices at the beginning of an execution. The semantics and correctness of the block labels $b_i$ and block colors $c_i$ are precisely the same as those described in \subsectionref{sec:certificate-upper-bound}, where blocks at level $i$ have length $t_i$. As before, the distance labels $d(v)$ encode (a share of) the purported distance of the $L_1$ block containing $v$ to the root.

\begin{dfn}
  \label{dfn:recursive-acy-label}
  Suppose $\ell$ is a family of recursive acyclicity labels for a graph $G = (V, E)$. We say that a family $\ell$ of recursive acyclicity labels is \dft{correct} if the $L_1$ labels are correct as in \defref{dfn:correct-acy-cert}, and for $i \geq 2$ the block labels in $b_i$ and block colors $c_i$ are correct as in \defref{dfn:correct-acy-cert} with $t_i = \lfloor \log t_{i-1} \rfloor$.
\end{dfn}

\begin{rem}
  For simplicity of presentation, we assume that for all $i \geq 2$ that $t_i$ divides $t_{i-1}$. Thus, each block in $L_{i-1}$ contains an integral number of sub-blocks. The general case can be obtained by allowing ``overlap'' of the last sub-block of $B$ in level $i$ with the first sub-block of $B'$ in $i$ where $B$ is the parent block of $B'$.
\end{rem}


Analogously to \propref{prop:acyclicity-certificate}, we obtain the following result.

\begin{prop}
  \label{prop:recursive-acy-label}
  Let $G = (V, E)$ be a graph. Then $G$ is acyclic if and only if it admits a correct family $\cC$ of recursive acyclicity labels.
\end{prop}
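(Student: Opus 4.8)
The plan is to reduce this proposition to \propref{prop:acyclicity-certificate}, exploiting the fact that the top-level labels $L_1$ of a recursive family are precisely the (non-recursive) acyclicity labels of \subsectionref{sec:certificate-upper-bound}. The higher levels $L_2, \ldots, L_k$ carry only block labels $b_i$ and block colors $c_i$ and no distance labels, so they impose structural constraints but encode no additional distance information.

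For the ``only if'' direction, I would observe that by \defref{dfn:recursive-acy-label} the $L_1$ labels $a$, $b_1$, $c_1$, $d$ of any correct recursive family satisfy all the provisions of \defref{dfn:correct-acy-cert}. Hence they constitute a correct non-recursive acyclicity labeling, and \propref{prop:acyclicity-certificate} immediately yields that $G$ is acyclic. The higher levels play no role in this implication.

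For the ``if'' direction I would reuse the construction from the proof of \propref{prop:acyclicity-certificate}. Since $G$ is connected and acyclic it is a tree; fix a root $u$ and set $d'(v) = \dist(v, u)$, and fix any top-level block length $t_1 \leq n$. The orientation labels $a(v) = d'(v) \mod 3$, the level-$1$ block labels $b_1$ (head/mid/tail according to $d'(v) \mod t_1$), the block colors $c_1$ (alternating by block-index parity), and the distance labels $d$ (chosen so each $L_1$ block $B$ has $D(B) = d'(v_0)$) are assigned exactly as before and are correct by \propref{prop:acyclicity-certificate}. For each level $i \geq 2$, I would assign $b_i$ and $c_i$ by the same recipe with block length $t_i = \lfloor \log t_{i-1} \rfloor$: a vertex is a level-$i$ head when $d'(v) \equiv 0 \mod t_i$, a tail when $d'(v) \equiv -1 \mod t_i$, and its level-$i$ color alternates with the index $\lfloor d'(v)/t_i \rfloor$ of its level-$i$ block. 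Because each level's block structure is determined solely by the single distance function $d'$ and the orientation it induces, the correctness of every level's block labels and colors follows verbatim from the argument in \propref{prop:acyclicity-certificate}, independently across levels.

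The only point requiring care is the termination of the length recursion: I would verify that iterating $t_i = \lfloor \log t_{i-1} \rfloor$ from $t_1 \leq n$ reaches a constant block length within $k = \log^* n$ levels, so that $k$ levels suffice and no further level is needed once $t_k$ is $O(1)$. The remaining technicality is the divisibility assumption $t_i \mid t_{i-1}$ from the remark preceding the proposition; under it each level-$(i-1)$ block contains an integral number of level-$i$ sub-blocks, so the nested block structures tile cleanly, and the general case is handled by the stated overlap convention. I do not expect a genuine obstacle here, as the statement is essentially a corollary of \propref{prop:acyclicity-certificate} together with a routine check that the nested block structure on a tree is simultaneously correct at every level.
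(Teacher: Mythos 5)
Your argument is correct and is exactly the route the paper intends: the paper gives no explicit proof, stating only that the result follows ``analogously to'' \propref{prop:acyclicity-certificate}, and your write-up is precisely that analogy spelled out (the $L_1$ labels alone give the ``only if'' direction via \propref{prop:acyclicity-certificate}, and the distance-based construction extends level by level for the ``if'' direction). No issues.
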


It is clear that recursive acyclicity labels are of length $O(\log^* n)$. Indeed, each of the labels in the $\log^* n$ recursive levels has length $O(1)$.

\begin{lem}
  \label{lem:recursive-acy-verify}
  Let $G = (V, E)$ be a graph, and $\cC$ a family of recursive acyclicity labels on $G$. Suppose that for some $i$, the labels in $L_{i+1}$ are correct. Then there exists a verifier $\bV_i$ for the labels in $L_i$ with run-time $O(2^{t_{i+1}})$, constant communication per round, and constant space.
\end{lem}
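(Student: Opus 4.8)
The plan is to treat this lemma as the inductive step of the recursion and to reduce it, as directly as possible, to the verifier $\Verify$ of \algref{alg:verify}. Verifying that the level-$i$ labels are correct means checking the orientation, block-label, block-coloring and (when $i = 1$) distance provisions of \defref{dfn:correct-acy-cert} with the block length equal to $t_i$. The purely local provisions---the $\bhead$/$\bmid$/$\btail$ defining rules (conditions 1 and 3), the two coloring conditions, and the correctness of the orientation labels $a$---are checkable in a single round with $O(1)$ space and communication, so I would dispatch these first and concentrate on the two genuinely non-local requirements: the block-length conditions 2 and 4 (every tail is at oriented distance $t_i$ from a head, and every vertex within distance less than $t_i$ of one), and, when $i = 1$, the $t_1$-fold incrementation that certifies the distance labels $d$. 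This is exactly the content of \algref{alg:verify}, whose \emph{only} dependence on the block length is that the vertices must be able to count $t_i$ rounds. In the present setting that count is not supplied by an oracle, so the crux is to synthesize it from the (correct) labels in $L_{i+1}$.

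I would build this counter as follows. Because $L_{i+1}$ is correct, the graph is partitioned, along the orientation, into level-$(i+1)$ blocks of exactly $t_{i+1} = \lfloor \log t_i \rfloor$ vertices each, correctly delimited by $\bhead$ and $\btail$ and properly two-colored. I use the $\Theta(t_{i+1})$ vertices of such a block---or of a constant number of consecutive sub-blocks, to absorb rounding---as the bit positions of a binary register, each vertex holding a single bit, so that no vertex ever stores more than $O(1)$ bits. Since $t_{i+1} = \lfloor \log t_i \rfloor$, the register has just enough bits to count through the $\Theta(2^{t_{i+1}}) = \Theta(t_i)$ values a level-$i$ block requires, and this two-sided fit---wide enough to reach $t_i$, narrow enough to be exhausted by it---is exactly what certifies the relation $t_{i+1} = \lfloor \log t_i \rfloor$ between the two levels. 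Incrementing the register once per tick is done by the carry-propagating $\Increment$ routine already used in \algref{alg:verify}, each vertex forwarding at most one carry bit to its unique child per round.

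With this distributed clock driving the execution, $\bV_i$ runs the level-$i$ analogue of \algref{alg:verify}: heads launch messages relayed down the orientation, every vertex asserts that it saw a head-message before the clock completed the block (condition 4), every tail asserts that its head-message arrived exactly as the clock completed the block (condition 2), and at level $1$ the heads additionally drive the $t_1$ increments of the distance shares, the clock signaling completion. Correctness then follows word for word from \lemmaref{lem:verify} with $t$ replaced by $t_i$, since the behavior is identical to \algref{alg:verify} except that the ``$t_i$ rounds have elapsed'' signal now comes from the clock rather than from an oracle. For the resource bounds, each vertex stores only a constant number of $L_i$ and $L_{i+1}$ labels, one register bit, one carry bit, and a constant number of flags, giving $O(1)$ space, and each message is a single relayed label or carry bit, giving $O(1)$ communication per edge per round. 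As the clock runs through $\Theta(t_i) = \Theta(2^{t_{i+1}})$ values, the run-time is $O(2^{t_{i+1}})$.

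The step I expect to require the most care is the timing of the distributed register. A single increment may trigger a carry chain of length up to $t_{i+1}$, so counting all the way to $t_i$ naively threatens a run-time of $O(t_i \cdot t_{i+1})$ rather than the claimed $O(t_i)$. I would resolve this with the standard amortized analysis of a binary counter: the total number of bit-flips over $t_i$ increments is $O(t_i)$, and in the distributed implementation the carries can be pipelined along the level-$(i+1)$ block so that the clock advances at amortized constant rate, keeping the total time $O(t_i) = O(2^{t_{i+1}})$. A secondary subtlety is that $t_i$ is neither a power of two nor known in advance; this is handled by having the verifier detect the tail of a level-$i$ block precisely when the $t_{i+1}$-bit register attains its top value, which simultaneously discharges condition 2 and pins $t_i$ down consistently with $t_{i+1} = \lfloor \log t_i \rfloor$.
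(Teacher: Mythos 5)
Your proposal follows essentially the same route as the paper: the purely local provisions are dispatched in one round, the correct level-$(i+1)$ sub-blocks are used as a distributed binary register (one bit per vertex) that counts the length of a level-$i$ block, overflow of that register forces rejection and hence caps the run-time at $O(2^{t_{i+1}})$, the tail's high bit certifies the matching lower bound on $t_i$, and conditions 2 and 4 together with the level-$1$ distance labels are checked exactly as in \lemmaref{lem:verify}. The only substantive details the paper treats that you gloss over are the handling of incomplete blocks terminating in non-tail leaves (via $\Tleaf$ tokens, so that a head is not confused by children whose blocks end prematurely), and your closing suggestion to detect the tail ``precisely when the register attains its top value'' should instead be the two-sided range check you describe earlier, since correctness only forces $t_i$ into an interval of the form $(2^{t_{i+1}-1}, 2^{t_{i+1}}]$ rather than $t_i = 2^{t_{i+1}}$ exactly.
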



\begin{algorithm}
  \caption{$\RVerify(i, L_i)$}
  \label{alg:r-verify-label}
  \begin{multicols}{2}
  \begin{algorithmic}[1]
    \STATE verify $a$ is correct
    
    \STATE verify properties 1 and 3 of correctness of $b_i$ correctness of $c_i$

    \IF{$i = \log^* n$}
    \STATE verify correctness of $b_i$ and $c_i$
    \RETURN
    \ENDIF

    \STATE $\tcount_{i+1} \leftarrow 0$
    \STATE $\Count(\tcount_{i+1}, 1, i)$
    \STATE $\Send(\tcount_{i+1}, \rec, i+1)$
    \ASSERT $\rec_{i+1} = \tcount_{i+1}$

    \IF{$i = 1$}
    \STATE $\Add(d(v), \tcount_2, \dcount, 1)$
    \STATE $\Send(\dcount, \dcount, 1)$
    \ASSERT $\dcount = d(v)$
    \ENDIF
    
  \end{algorithmic}
  \end{multicols}
\end{algorithm}

We describe a verifier $\RVerify$ (\algref{alg:r-verify-label}) for $L_i$ assuming $L_{i+1}$ is correct. 
Suppose $B$ is a block in level $i$, and $B_1, B_2, \ldots, B_s$ its sub-blocks for $s = t_i / t_{i+1}$, with $B_j$ the parent of $B_{j+1}$. By assumption, the block labels for the $B_j$ are correct. The head $v_0$ of $B$ verifies that it is also the head of $B_1$, and sends a token $\Tcount$ to all of its children. The vertices in $B$ bounce $\Tcount$ to the tail, which then bounces $\Tcount$ back up to $v_0$. Meanwhile, the vertices of each $B_j$ hold shares of a counter $\tcount_j$, which computes $t_i$ by incrementing itself until $\Tcount$ returns to the head. If the counter $\tcount_j$ ever exceeds $2^{t_{i+1}}$ (i.e., if the bit held by the tail of $B_j$ is ever incremented twice), then the vertices in $B_j$ will halt and reject the label. It is clear that this step of the verification will always halt in time $O(2^{t_{i+1}})$. After counting, the blocks in $L_{i+1}$ verify that they agree on $\tcount_j$. Further, tails of $B_j$ verify that their share of $\tcount$ is $1$, implying that $2^{t_{i-1} - 1} < t_i \leq 2^{t_{i-1}}$.

There is a slight complication in the verification algorithm described above that arises when a block $B$ terminates prematurely in a leaf (a vertex of degree 1) which is not a tail. In correct block labels, if $v_0$ is the head of overlapping \dft{complete} blocks (i.e., all have tails at distance $t_i$ from the head) then $v_0$ should receive $\Tcount$ from all of its children at the same time, $2 t_i$. However, if some block containing $v_0$ is \dft{incomplete} (terminates prematurely with a leaf) then $v_0$ may receive messages from its children in different rounds. To avoid this problem, leaves which are not labeled $\btail$ respond with a token $\Tleaf$ to their parent upon receiving $\Tcount$. The parent then knows not to expect a $\Tcount$ from this child. Similarly, if an internal vertex receives $\Tleaf$ from all of its children (perhaps in different rounds), it sends $\Tleaf$ to its parent. Then vertices check that they receive $\Tcount$ from all children at the same time, except those which have sent $\Tleaf$ if a previous round.

Finally, if $i = 1$, the vertices must additionally verify the correctness of the distance labels $d(v)$. 
Suppose $B = (v_0, \ldots, v_{t-1})$ and $B' = (w_0, \ldots, w_{t-1})$ are blocks with $B$ the parent of $B'$. The tail $v_{t-1}$ sends $b(v_{t-1})$, $c(v_{t-1})$, and $d(v_{t-1})$ to its children, and sends the token $\Tstart$ to its parent, $v_{t-2}$. The vertices continue to echo any messages received from their parents to their children, and if a vertex $v$ receives $\Tstart$ from its children, it additionally sends $b(v)$, $c(v)$, and $d(v)$ to its children. When $w_{t-1}$ (the tail of $B'$) receives $d(v_{t-1})$, it saves this value and sends $\Tstop$ to its parent. When a vertex $w$ receives $\Tstop$, it saves the value $d(v)$ in the message it received from its parent such that $c(v) \neq d(v)$, and echos $\Tstop$ to its parent. After $2 t$ rounds, the procedure terminates, and every $w_i$ holds $d(v_i)$. In a further $3 t$ rounds, $B'$ distributively increments the $d(v_i)$, and verify that the incremented $d(v_i)$ are equal to $d(w_i)$, thus ensuring the distance labels are correct.

\begin{proof}[of \lemmaref{lem:recursive-acy-verify}]
  We prove that $\RVerify(i, L_i)$ (\algref{alg:r-verify-label}) is a verifier for $L_i$ whenever $L_{i+1}$ is a correct. As in the proof of \lemmaref{lem:verify}, we focus on verifying properties 2 and 4 in the correctness of $b_i$. Properties 1 and 3 of the correctness of $b_i$, as well as the correctness of $c_i$ can be trivially verified in a single communication round with constant communication. Let $v_0$ be a root in $L_i$. By induction, every vertex at distance $\tau$ from $v_0$ receives $\Tcount$ at time $\tau$. Thus, property 4 of the correctness of $b_i$ is satisfied if and only if no vertex fails in a call to $\Count(\tcount_{i+1}, 1, i)$, which occurs if and only if each $2^{t_{i+1} - 1} < \tcount_{i+1} \leq 2^{t_{i+1}}$ (line 20 of $\Count$ ensures the first inequality, while the check in lines 11--13 of $\Increment$ ensure the second inequality). Property 2 in the correctness of $b_i$ holds if and only if all vertices accept the assertion at line 10 of $\RVerify(i, L_i)$.

  The proof that $d$ is correct when $i = 1$ if and only if no vertex rejects in lines 11--15 in $\RVerify(i, L_i)$ is analogous to the argument in \lemmaref{lem:verify}. Finally, it is clear that the per-round communication is constant, as is the space requirement (assuming that only levels $L_i$ and $L_{i+1}$ are stored). As for the run-time, notice that $\Count(\ctr, m, i)$ always terminates in time at most $2^{m t_{i+1}}$ by the verification at lines 11--13 of $\Increment$. Further, if no vertex fails during the call to count $\Count$, then $\Add$ and $\Send$ will similarly halt after $2^{t_{i+1}} \leq t_i$ rounds.
\end{proof}

\begin{proof}[of \theoremref{thm:recursive-verify}]
  By \propref{prop:recursive-acy-label}, it suffices to prove the existence of a verifier $\bV$ of recursive acyclicity labels with the claimed communication, space, and time. We induct on $k - i$ (where $k = \log^* n$) that the correctness of $L_{i}$ can be verified in the desired run-time, using constant communication and space. When $i = k$, the correctness of labels is a local property (independent of the size of the network). Thus, each vertex $v$ can verify the correctness of $L_{k}$ by analyzing the state of $L_k$ labels in $N(v, O(1))$, which can be accomplished in constant time, space, and communication. Now suppose the correctness of $L_{i+1}$ can be verified in time $O(t)$ using constant communication and space. By \lemmaref{lem:recursive-acy-verify}, $\RVerify(i, L_i)$ (\algref{alg:r-verify-label}) is a verifier for $L_i$. Further, $\RVerify(i, L_i)$ runs in time $O(t_i) \leq O(\log(t_1))$, uses constant communication, and space. \theoremref{thm:recursive-verify} the follows by running $\RVerify(k, L_k)$, followed by $\RVerify(k-1, L_{k-1})$ and so on, up to $\RVerify(1, L_1)$. The run-time is $O(t_k + t_{k-1} + \cdots + t_1) \leq O(t_1)$.
\end{proof}

\begin{rem}
  We can modify the recursive scheme described here to use only finitely many levels of recursion, but with the tradeoff of using more memory per-vertex. In particular, if only the labels of $L_1$ are given, but each vertex has access to a counter with $\log t$ bits of memory, we recover precisely the scheme of \subsectionref{sec:certificate-upper-bound} in the case where $t = \Omega(\log n)$. If we give labels in $L_1$ and $L_2$, and each vertex has a counter with $\log \log t$ bits of memory, then the scheme will still be correct. However, we get a greater degradation of run-time due to round-off errors in $\log \log t$. Specifically, if we have $m - 1 < \log \log t \leq m$, then we obtain
  \[
  2^{2^{m-1}} < t \leq \paren{2^{2^{m-1}}}^2.
  \]
  Thus, even if $\log \log t$ is given truthfully as the size of the counter, the run-time of $\RVerify$ may be quadratic in $t$ if the $L_1$ labels are improperly formed. Finally, given labels $L_1$, $L_2$, and $L_3$, and a counters of size $\log^{(3)} t$, the run-time may vary exponentially from $\log n$. Thus, our worst-case run-time is already only $O(n)$. The fully recursive scheme thus achieves the same worst-case run-time with $\log^* n$ memory per vertex.
\end{rem}





\bibliographystyle{abbrv}
\bibliography{local-verification}


\appendix







\section{Super-constant and sub-linear algorithms}
\label{sec:impossibility}

In this section, we show that any algorithm $\cA$ which has run-time which is $\omega(1)$ and $o(n)$ for all inputs must have access to some truthful global information about $G$ or $t$. Suppose $G = (V, E)$ is a graph, $S$ a (possibly infinite) set of states, and $\varphi : V \to S$ an assignment of initial states. In the $\tau$-th step of computation, each vertex $v$ learns the state of its neighbors up to distance $\tau$, and must decide to halt or continue. Thus, we can view an algorithm as a function $f$ on from labeled graphs to the set $\set{\halt, \continue}$. On the $\tau$-th step, the vertex $v$ computes $f(N(v, \tau))$ either halts or continues based on the value of $f$. We say that $\cA$ halts in time $t$ on input $(G, \varphi)$ if every vertex halts in time $\tau \leq t$ and some vertex $v$ halts precisely at time $t$. We say that $\cA$ has run-time $t$ on $G$ if for all initial inputs $\varphi$ for $G$, the run-time of $(G, \varphi)$ is at most $t$, and there exists some initial input for which the run-time is $t$. We denote the run-time of $\cA$ on $G$ by $t(G)$.

\begin{prop}
  \label{prop:impossibility}
  Let $\cC = \set{C_3, C_4, \ldots}$ denote the family of cycle graphs. Suppose the sequence of run-times $t(C_3), t(C_4), \ldots$ is unbounded. Then $t(C_n) = \Omega(n)$. 
\end{prop}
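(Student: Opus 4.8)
The claim is a lower bound on run-time for cycle graphs. The key observation is that on a cycle $C_n$, every vertex is indistinguishable from every other vertex when all states are identical: $C_n$ is vertex-transitive, and an algorithm $\cA$ sees only the $\tau$-neighborhood $N(v,\tau)$ at step $\tau$. Let me think about what "unbounded run-time" buys us.

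Let me think about the contrapositive structure. Suppose $t(C_3), t(C_4), \dots$ is unbounded but $t(C_n)$ is NOT $\Omega(n)$. I want to derive a contradiction. Actually the cleanest approach is a direct argument exploiting local indistinguishability.

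**The approach via locality and unbounded growth.** The plan is to use the fact that $N(v,\tau)$ on a cycle $C_n$, when all vertices carry the same state (say under some fixed worst-case assignment achieving run-time), looks exactly like a path of length $2\tau+1$ as long as $2\tau + 1 < n$ — that is, as long as $\tau < (n-1)/2$, the $\tau$-neighborhood of any vertex in $C_n$ is isomorphic to the $\tau$-neighborhood of a vertex in any larger cycle $C_{n'}$ (or in the infinite line). So the decision $f(N(v,\tau))$ to halt or continue at step $\tau$ is identical across all cycles $C_{n'}$ with $n' > 2\tau+1$, for the corresponding "all-same-state" configuration.

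**The core argument.** First I would fix, for each $n$, a worst-case input $\varphi_n$ achieving run-time $t(C_n)$; but to exploit vertex-transitivity I want to restrict to configurations that are themselves homogeneous, or argue via the neighborhood-isomorphism directly. The cleanest route: consider an arbitrary input and note that the halting decision at step $\tau$ depends only on $N(v,\tau)$. Suppose for contradiction $t(C_n) = o(n)$ along some subsequence — more precisely, suppose $t(C_n)$ is not $\Omega(n)$, so there is a subsequence $n_k \to \infty$ with $t(C_{n_k})/n_k \to 0$. By unboundedness, pick any target value $T$; there exists $N$ with $t(C_N) \geq T$. Now choose $n_k$ large enough that $t(C_{n_k}) < (n_k - 1)/2$ AND $n_k > 2T+1$. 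On the input to $C_N$ witnessing run-time $\geq T$, some vertex $v$ does not halt before step $T$, meaning $f(N(v,\tau)) = \continue$ for all $\tau < T$. Since $T < (N-1)/2$ is arranged, each such neighborhood $N(v,\tau)$ is a plain path (all states whatever $v$'s neighborhood dictates); the same path-neighborhood appears in $C_{n_k}$ for the analogous configuration, forcing the corresponding vertex in $C_{n_k}$ to also continue through step $T$. Hence $t(C_{n_k}) \geq T$, contradicting $t(C_{n_k}) < (n_k-1)/2$ once $T$ is taken close to the crossover. The main obstacle to be careful about is matching states: I must transplant the witnessing configuration from $C_N$ onto $C_{n_k}$ consistently so that the relevant $\tau$-neighborhoods genuinely coincide as labeled graphs, which works precisely because $\tau < (N-1)/2$ keeps neighborhoods acyclic (path-shaped) and hence embeddable.

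**Main obstacle and cleanup.** The delicate step is quantifier management: I need to simultaneously use unboundedness (to get arbitrarily large $t(C_N)$) and the supposed failure of $\Omega(n)$ (to get cycles where run-time is tiny relative to size), then transplant the "slow" configuration from the small-but-slow instance into a large-but-fast instance to force the latter to also be slow. The transplant is valid only while neighborhoods stay below the cycle's "wrap-around" threshold, so I must verify the inequalities $t(C_N) < (N-1)/2$ hold on the slow instances — which is exactly where the assumed $o(n)$ behavior is used — and that the embedded neighborhoods remain path-isomorphic throughout the halting window. Once the inequalities are lined up, the contradiction is immediate and the proposition follows by contraposition. I expect the bulk of the write-up to be bookkeeping of these thresholds rather than any conceptually hard step.
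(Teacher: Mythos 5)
Your core mechanism --- transplanting the $(T-1)$-neighborhood of a vertex that has not yet halted from one cycle into another, using the fact that $f(N(v,\tau))$ depends only on the (path-shaped) view --- is exactly the mechanism the paper uses. However, your concluding step does not close as written, because of the order in which you fix the quantifiers. You pick $T$ first, then a witness $N$ with $t(C_N)\ge T$, and only then choose $n_k$ from the fast subsequence subject to $n_k > 2T+1$. The transplant then yields $t(C_{n_k}) \ge T$, and you claim this contradicts $t(C_{n_k}) < (n_k-1)/2$ ``once $T$ is taken close to the crossover.'' It cannot be so taken: the transplant itself requires $T < (n_k-1)/2$, so $t(C_{n_k}) \ge T$ is perfectly consistent with $t(C_{n_k}) < (n_k-1)/2$. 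Worse, the subsequence $\set{n_k}$ may be arbitrarily sparse, so any admissible $n_k$ exceeding $2T+1$ may be astronomically larger than $T$, leaving $T$ nowhere near the crossover. The repair is to reverse the order: first fix a fast cycle $C_n$ with $t(C_n) < n/4$ (using the full strength of the $o(n)$ assumption rather than merely $t(C_n) < (n-1)/2$), then set $T = \lceil n/4 \rceil$, then invoke unboundedness to obtain a slow source $C_N$ with $t(C_N) \ge T$; the transplant gives $t(C_n) \ge T > t(C_n)$, an actual contradiction.

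The paper sidesteps this entirely by arguing directly rather than by contradiction: for each $k$ it takes $n_k$ to be the \emph{smallest} $n$ with $t(C_n) \ge k$, transplants the $(k-1)$-neighborhood of a non-halting vertex of $C_{n_k}$ into $C_{2k}$, and concludes $t(C_{2k}) \ge k$ outright, i.e., $t(C_n) \ge n/2$. Choosing the target cycle to have size exactly $2k$ --- just large enough to host the transplanted path --- is what ties the lower bound to the target's size and makes the estimate immediate, with no appeal to a hypothetical fast subsequence. (Both your sketch and the paper's proof implicitly assume the source neighborhood does not wrap around the source cycle, i.e., $N \ge 2T-1$; this follows from the model, since a vertex whose view has saturated to the whole cycle either halts or never halts, but it deserves an explicit sentence in either write-up.)
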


\begin{proof}
  Since $t(C_n)$ is unbounded, define $n_k$ to be the smallest value of $n$ for which $t(C_n) \geq k$. Suppose $\varphi : C_{n_k} \to S$ gives initial states for which the run-time is at least $k$, and in particular, that the vertex $v$ does not halt after $k-1$ rounds. Let $v_{-k+1}, \ldots, v_{-1}, v, v_1, \ldots, v_k$ denote $v$'s $k-1$ neighborhood.

  Now consider $C = C_{2k}$. Fix $w \in C$ and let $w_{-k+1}, \ldots, w_{k-1}$ denote $w$'s $k-1$ neighborhood. Let $\psi : C \to S$ be an initial assignment which satisfies $\psi(w_i) = \psi(v_i)$ for all $i = -k+1, \ldots, k-1$. Thus, $N(v, k-1)$ and $N(w, k-1)$ are isomorphic. In particular, this implies that $\cA$ will not halt at $w$ in fewer than $k$ rounds. Thus, $t(C_{2k}) \geq k$. Therefore, for all $n$, we have $t(C_n) \geq n/2$, which gives the desired result.
\end{proof}

\clearpage

\section{Pseudocode for Subroutines}
{\footnotesize
\begin{algorithm}
  \caption{$\Count(\ctr, m, i)$: Computes the length of a block with shares of the count $\ctr \in \set{0, 1}^m$.}
  \label{alg:count}
  \begin{multicols}{2}
  \begin{algorithmic}[1]
    \STATE $\firstPass \leftarrow \FALSE$
    \IF{$b_i(v) = \bhead$}
    \STATE send $\Tcount^i$ to children
    \STATE $\firstPass \leftarrow \TRUE$
    \ENDIF
    
    \REPEAT
    \STATE $\Increment(\ctr, m, 1, i+1)$
    \STATE $\Bounce(\Tcount^i, i)$
    \UNTIL{$\firstPass = \TRUE$}

    \REPEAT
    \STATE $\Increment(\ctr, m, 1/2, i+1)$
    \STATE $\Bounce(\Tcount^i, i)$
    \STATE $\Echo(\Tstop^i)$
    \IF{$b_i(v) = \mathbf{head}$ and received $\Tcount^i$ from children}
    \STATE send $\Tstop^i$ to children
    \RETURN
    \ENDIF
    \UNTIL{receive $\Tstop^i$ from $P(v)$}
    \IF{$b_i(v) = \btail$}
    \ASSERT $\ctr > 2^{m-1} - 1$
    \ENDIF
  \end{algorithmic}
  \end{multicols}
\end{algorithm}

\begin{algorithm}
  \caption{$\Increment(\ctr, m, \val, i)$: Increments the counter $\ctr \in \set{0, 1}^m$ by $\val$, sending carry bits to children. Rejects if new count exceeds capacity of block.}
  \label{alg:increment}
  \begin{multicols}{2}
  \begin{algorithmic}[1]
    \IF{$b(v) = \bhead$}
    \STATE $\carry \leftarrow (\ctr + \val) / 2^m$
    \STATE $\ctr \leftarrow (\ctr + \val) \mod 2^m$
    \ELSIF{receive value $\val'$ from parent}
    \STATE $\carry \leftarrow (\ctr + \val') / 2^m$
    \STATE $\ctr \leftarrow \ctr + \val'$
    \ENDIF
    \IF{$\carry \neq 0$}
    \IF{$b(v) = \bhead$ or $\bmid$}
    \STATE send $\carry$ to all children
    \ELSE
    \STATE reject
    \ENDIF
    \ENDIF
  \end{algorithmic}
  \end{multicols}
\end{algorithm}

\begin{algorithm}
  \caption{$\Bounce(T, i)$: bounces a token $T$ from head to tail, and back to the head. Fails if $T$ is received from different children at different times.}
  \label{alg:bounce}
  \begin{multicols}{2}
  \begin{algorithmic}[1]
    \STATE $\leaves \leftarrow \varnothing$
    \IF{receive $T$ from $P(v)$}
    \IF{$b_i(v) = \btail$}
    \STATE{send $T$ to $P(v)$}
    \ELSIF{$v$ has no children}
    \STATE send $\Tleaf^i$ to $P(v)$
    \ELSE
    \STATE send $T$ to all children
    \ENDIF
    \ENDIF
    \IF{receive $\Tleaf^i$ from child $w$}
    \STATE $\leaves \leftarrow \leaves \cup \set{w}$
    \ENDIF
    \IF{$\leaves$ contains all children}
    \STATE send $\Tleaf^i$ to $P(v)$
    \ELSIF{receive $T$ from all children $\notin \leaves$}
    \STATE send $T$ to $P(v)$
    \ELSE
    \STATE fail
    \ENDIF
  \end{algorithmic}
  \end{multicols}
\end{algorithm}

\begin{algorithm}
  \caption{$\Echo(T, i)$: sends a token $T$ from head to tail}
  \label{alg:echo}
  \begin{multicols}{2}
  \begin{algorithmic}[1]
    \IF{receive $T$ from $P(v)$}
    \IF{$b_i(v) = \textrm{mid}$}
    \STATE send $T$ to all children
    \ELSIF{$b_i(v) = \textrm{head}$}
    \STATE fail
    \ENDIF
    \ELSIF{receive $T$ from any child}
    \STATE fail
    \ENDIF
  \end{algorithmic}
  \end{multicols}
\end{algorithm}

\begin{algorithm}
  \caption{$\Send(\msg, \rec, i)$: sends messages $\msg$ stored in block $B$ to $B'$, $B$'s child, stores message as $\rec$}
  \label{alg:send}
  \begin{multicols}{2}
  \begin{algorithmic}[1]
    \IF{$b_i(v) = \btail$}
    \STATE send $\set{\msg, c_i(v)}$ to all children
    \STATE send $\Tstart^i$ to $P(v)$
    \REPEAT
    \STATE $\send \leftarrow$ message from $P(v)$
    \IF{$\send$ contains $\set{\msg(w), c_i(w)}$ with $c_i(w) \neq c_i(v)$}
    \STATE $\rec \leftarrow \msg(w)$
    \STATE send $\Tstop^i$ to $P(v)$
    \STATE $\send \leftarrow \send\setminus\set{\msg(w), c_i(w)}$
    \ENDIF
    \STATE send $\send$ to all children
    \UNTIL{receive $\Tstop^i$ from $P(v)$}
    \ELSE
    \REPEAT
    \STATE $\send \leftarrow$ message from $P(v)$
    \IF{receive $\Tstop^i$ from all children}
    \STATE $\rec \leftarrow \msg(w)$ where $c_i(w) \neq c_i(v)$
    \STATE $\send \leftarrow \send\setminus\set{\msg(w), c_i(w)}$
    \ENDIF
    \STATE send $\send$ to all children
    \STATE $\Bounce(\Tstop^i)$
    \IF{received $\Tstop^i$ from all children and $b_i(v) = \bhead$}
    \STATE send $\Tstop^i$ to all children
    \RETURN
    \ENDIF
    \UNTIL{receive $\Tstop^i$ from $P(v)$}
    \ENDIF
  \end{algorithmic}
  \end{multicols}
\end{algorithm}

\begin{algorithm}
  \caption{$\Add(\ctr_1, \ctr_2, \ctr_3, m)$: adds $\ctr_1$ and $\ctr_2$ and stores result as $\ctr_3$; all counters are in $\set{0, 1}^m$}
  \label{alg:add}
  \begin{multicols}{2}
  \begin{algorithmic}[1]
    \IF{$b(v) = \bhead$}
    \STATE $\carry \leftarrow (\ctr_1 + \ctr_2) / 2^m$
    \STATE $\ctr_3 \leftarrow (\ctr_1 + \ctr_2) \mod 2^m$
    \STATE send $\carry$ to all children
    \ELSIF{receive $\val$ from $P(v)$}
    \STATE $\carry \leftarrow (\ctr_1 + \ctr_2 + \val) / 2^m$
    \STATE $\ctr_3 = (\ctr_1 + \ctr_2 + \val) \mod 2^m$
    \STATE send $\carry$ to all children
    \ENDIF
  \end{algorithmic}
  \end{multicols}
\end{algorithm}
}

\end{document}